\documentclass[12pt]{article}

\usepackage{PRIMEarxiv}
\usepackage[utf8]{inputenc} 
\usepackage{setspace}      
\usepackage[T1]{fontenc}    
\usepackage{lmodern}  
\usepackage{hyperref}       
\usepackage{url}            
\usepackage{booktabs, multicol, multirow}       
\usepackage[table]{xcolor}
\usepackage{natbib}
\usepackage{amsfonts, amsmath, amssymb, amsthm, bm}       %
\usepackage{nicefrac}      
\usepackage{microtype}      
\usepackage{lipsum}
\usepackage{fancyhdr}       
\usepackage{graphicx}       
\graphicspath{{media/}}     
\usepackage{algorithm, algpseudocode}

\newcommand{\indep}{\perp \!\!\! \perp}

\DeclareMathOperator*{\argmax}{arg\,max}
\DeclareMathOperator*{\argmin}{arg\,min}

\newcommand{\B}{{\bf B}}

\newcommand{\bb}{{\bf b}}

\newcommand{\M}{{\bf M}}

\newcommand{\bv}{{\bm v}}
\newcommand{\bu}{{\bm u}}
\newcommand{\X}{{\bf X}}
\newcommand{\x}{{\bm{x}}}
\newcommand{\Z}{{\bf Z}}

\newcommand{\bbeta}{\bm{\beta}}
\newcommand{\bEta}{\bm{\eta}}

\newcommand{\bLambda}{\bm{\Lambda}}

\newcommand{\R}{\mathbb{R}}
\newcommand{\spc}{{\mathcal S}_{Y|\X}}
\newcommand{\E}{\mathbb{E}}

\newcommand{\Sig}{\bm{\Sigma}}

\newtheorem{theorem}{Theorem}

\newtheorem{proposition}{Proposition}
\theoremstyle{definition}

\newtheorem{assumption}{Assumption}

\newtheorem{remark}{Remark}

\title{Robust Dual-Regularized Variable Selection under Outlier Contamination}

\author{
  Abdul-Nasah Soale \thanks{Corresponding author}\\
  Department of Mathematics, Applied Mathematics, and Statistics, \\
Case Western Reserve University, Cleveland, OH, USA\\
  \texttt{abdul-nasah.soale@case.edu} \\
   \And
 Adewale F. Lukman \\
 Department of Mathematics and Statistics,\\
 University of North Dakota,  Grand Forks, ND 58202, USA\\
 \texttt{adewale.lukman@und.edu} \\
  \AND
 Essoham Ali\\
 CNRS UMR 6205, Univ Bretagne Sud, Vannes, France\\
 \texttt{essoham.ali@univ-ubs.fr} 
}

\begin{document}
\maketitle

\begin{abstract}
Real data often contain unusual observations that can exert disproportionate effects on variable selection, especially in complex predictor settings. We propose a two-stage {\it sparse median outer product of gradients (smOPG)} method for variable selection in single index models with outlier contamination. We first estimate sparse local gradients via \(\ell_1\)-penalized local median regression and then recover the active predictor set from a rank-one sparse approximation of the resulting gradient matrix using regularized singular value decomposition. The combination of median regression and local weighting provides robustness to both response outliers and leverage points. Extensive simulations across varying dimensions and contamination mechanisms demonstrate the favorable variable selection performance of smOPG relative to existing methods. Applications to air pollution and genomic data demonstrate practical utility, while theory establishes active-set recovery without requiring selection consistency of individual local regressions.
\end{abstract}

\keywords{Influential observations \and leverage points \and sparse estimation \and sufficient dimension reduction}

\section{Introduction}\label{sec:intro}
Variable selection is a central problem in high-dimensional regression, especially in applications from medicine, economics, and biology, where interpretability is often as important as predictive accuracy. In many such settings, fully parametric models are too restrictive, whereas fully nonparametric methods suffer from severe dimensionality constraints. single index models provide a useful compromise by allowing the response to depend on the predictors only through an unknown linear combination \(\bbeta^\top \X\), thereby capturing nonlinear regression structure while retaining a low-dimensional representation \citep{ichimura1993semiparametric, hardle1993optimal}.

When \(\bbeta\) is sparse, recovery of its support yields a natural notion of active subset selection. This perspective connects sparse single index modeling with sufficient dimension reduction (SDR), where the objective is to identify a low-dimensional subspace of the predictor space that preserves all regression information about the response \citep{li1991sliced, xia2002adaptive, li2018sufficient}. In the single index setting, this subspace is one-dimensional and coincides with \(\mathrm{span}(\bbeta)\). Variable selection may therefore be viewed as recovery of a sparse sufficient direction \citep{wang2008nonlinear}.

A substantial literature studies estimation and variable selection for single index models and related dimension reduction problems. Some methods estimate the index direction directly without specifying the link function \citep{powell1989semiparametric, hardle1989investigating, li1989regression, li1991sliced}, whereas others estimate the index and the link jointly through semiparametric or nonparametric smoothing \citep{hardle1993optimal, ichimura1993semiparametric, xia2002adaptive, xia2006asymptotic, yin2005direction}. In high-dimensional settings, sparse versions of these procedures have been proposed to improve interpretability and recover active predictors.

Despite their flexibility, existing procedures can be sensitive to contamination. In practice, high-dimensional data often contain heavy-tailed response errors, atypical covariate values (leverage points), and strong predictor dependence. Mean-based methods are particularly vulnerable to extreme response values \citep{Cook1977, rajaratnam2019influence}, while local smoothing procedures may still be unstable in the presence of influential observations \citep{soale2025adaptive}. Several robust alternatives have therefore been proposed, including rank-based, quantile-based, and signed-rank methods \citep{kong2007variable, lin2019sparse, bindele2019robust}. These methods improve robustness relative to mean-based approaches, but they also have important limitations. Rank-based procedures such as those proposed in \cite{rejchel2020rank} rely on response ordering and may lose directional information when the regression link is symmetric or nonmonotone. \cite{song2022robust} proposed a composite-quantile method that utilizes information from multiple parts of the conditional distribution, which can improve efficiency but may also inherit instability from noncentral quantiles under heavy-tailed errors or contamination.

In this paper, we propose a sparse median outer-product-of-gradients procedure, denoted smOPG, for active subset recovery in single index models. We formulate the active subset recovery through a sparse sufficient dimension reduction framework based on the conditional median. First, we estimate sparse local median gradients and then aggregate them through a sparse rank-one approximation of the resulting gradient matrix. Thus, the procedure combines the robustness of the median to extreme response outliers with local kernel weighting, which downweight distant leverage points. We also establish theoretical properties of the proposed estimator, including conditions under which the second-stage aggregation recovers the active predictor set even when individual local penalized regressions are not selection consistent. 

The remainder of the article is organized as follows. Section~\ref{sec:rankfail} discusses the limitations of mean-, rank-, and composite quantile-based approaches and motivates median-based sufficient dimension reduction. Sections~\ref{sec:smOPG-LASSO} and \ref{sec:sample_est} introduce the proposed smOPG procedure at the population and sample levels, respectively. Section~\ref{sec:theory} presents the theoretical properties of the estimator. Section~\ref{sec:simulation} reports simulation studies, Section~\ref{sec:realdata} presents two real-data applications, and the conclusion is given in Section~\ref{sec:conclusion}. Proofs and additional technical results are given in the supplemental Appendix.

Throughout the paper, \(\|\cdot\|_q\) denotes the \(\ell_q\)-norm, \(|A|\) denotes the cardinality of a set \(A\), and \(\mathrm{span}(\M)\) denotes the column space of matrix \(\M\).

\section{The Case for Median-Based Variable Selection}\label{sec:rankfail}
To motivate variable selection via median sufficient dimension reduction, it is helpful to contrast it with several common alternatives, including mean-based, rank-based, and composite-quantile-based approaches. We begin with the familiar mean-regression single index model
\begin{align}
Y = g(\bbeta^\top \X) + \epsilon,
\text{ where } \E(\epsilon\mid \X)=0,
\label{eq:mean-sim}
\end{align}
so that $\E(Y\mid \X=\x)=g(\bbeta^\top \x)$.
A popular local approximation of \(g(\bbeta^\top x)\) is the kernel-weighted average
\[
\frac{\E\{Y K_h(\bbeta^\top \X-\bbeta^\top \x)\}}
{\E\{K_h(\bbeta^\top \X-\bbeta^\top \x)\}},
\]
where \(K_h(u)=h^{-1}K(u/h)\) is a univariate kernel with bandwidth \(h > 0\), and the corresponding influence function is proportional to $K_h(\bbeta^\top \X-\bbeta^\top \x)\,\epsilon$.

Thus, although kernel localization downweights observations that are far from \(\x\) in the projected index direction, the effect of the response residual remains unbounded. An observation with an unusually large response can therefore exert a disproportionate influence on the local fit and, consequently, on estimation of \(\bbeta\). This issue is particularly relevant in local smoothing, where the effective sample size is determined by the bandwidth and is typically much smaller than the full sample size. In addition, mean-based methods require existence of the conditional first moment, which may fail under sufficiently heavy-tailed error distributions, such as Cauchy.

A natural way to reduce sensitivity to extreme response magnitudes is to replace the response by ordering information. Following \cite{rejchel2020rank}, consider the rank-type transformation \(R=F(Y)\), where \(F\) denotes the distribution function. Since rank-based procedures depend on the relative ordering of responses rather than on their magnitudes, they are substantially less sensitive than mean regression to extreme response values. However, this robustness depends on whether the ordering of \(Y\) continues to reflect the underlying index structure, which can be restrictive when the regression link is not monotone. 

For example, if \(g(t)=t^2\), then even in the absence of noise the ordering of the responses is governed by \((\bbeta^\top \X)^2\), not by the signed index \(\bbeta^\top \X\). In such cases, ordering information alone does not identify the direction of \(\bbeta\), although the one-dimensional subspace \(\mathrm{span}(\bbeta)\) may still be recoverable. For this reason, many rank-based procedures rely on monotonicity assumptions on \(g\); see, for example, \cite{rejchel2020rank}. The difficulty can become more pronounced under heteroscedasticity. For example, if $Y=g(\bbeta^\top \X)+\sigma(\X)\epsilon$, then the conditional behavior of \(F(Y)\) generally depends not only on \(\bbeta^\top \X\) but also on the scale function \(\sigma(\X)\). Thus, the ordering of the responses is not governed solely by the sufficient index. Rank-based regression therefore attenuates sensitivity to response magnitude, but its success depends critically on preservation of the ordering induced by the latent index.

Quantile-based methods provide another robust alternative. In related work for single index models, \cite{song2022robust} proposed a penalized composite quantile regression approach for robust variable selection in high-dimensional settings. Their approach incorporates information across multiple quantile levels, improving robustness relative to least-squares procedures while utilizing different parts of the conditional response distribution. However, this broader use of quantile information also distinguishes them from a median-only procedure. Because composite quantile regression aggregates information across several quantile levels, its performance may be influenced by variability in noncentral quantiles, especially under heavy-tailed errors or contamination. Thus, while composite quantile methods can be more informative than a single-quantile procedure, they may also inherit instability from portions of the conditional distribution away from the center.

On the other hand, median regression retains the original response scale while limiting the influence of large residuals through a bounded score function, providing a robust alternative to mean, rank, and composite quantile regression. Specifically, under the single index model
\begin{align}
Y = g(\bbeta^\top \X)+\epsilon,
\text{ where } Q_{1/2}(\epsilon\mid \X)=0,
\label{eq:median-sim}
\end{align}
the conditional median $Q_{1/2}(Y\mid \X)=g(\bbeta^\top \X)$ with the corresponding
influence function proportional to 
\[
K_h(\bbeta^\top \X-\bbeta^\top \x)\,\psi_{1/2}(\epsilon),
\text{ where }
\psi_{1/2}(u)=\frac12-\mathbb{I}(u<0).
\]
Because \(\psi_{1/2}(u)\in[-1/2,1/2]\), the response-side contribution of any single observation is uniformly bounded, regardless of the magnitude of its residual. Median regression therefore provides substantially stronger protection against response outliers and heavy-tailed errors than mean regression, while avoiding the global ordering assumptions often needed in rank-based methods. Relative to composite quantile procedures, it focuses exclusively on the central conditional quantile and thus avoids direct reliance on tail or other noncentral quantiles. It also requires weaker moment conditions, since the conditional median remains well-defined even when the conditional mean does not exist. These features make it a natural basis for robust sufficient dimension reduction and variable selection.

\section{Active set selection via a sparse central direction}\label{sec:smOPG-LASSO}
To avoid imposing structural restrictions on the link function \(g(\cdot)\) in Model \eqref{eq:median-sim}, we formulate variable selection through sufficient dimension reduction. Specifically, we seek a direction \(\Gamma\in\mathbb R^p\) such that
\begin{align}
Y \indep \X \mid \Gamma^\top \X,
\label{sdr}
\end{align}
with \(\|\Gamma\|_2=1\) imposed for identifiability. For the single index context, the central subspace is one-dimensional, and the sufficient direction \(\Gamma\) is proportional to the index coefficient \(\bbeta\). Hence, \(\mathrm{span}(\Gamma)=\mathrm{span}(\bbeta)\), and the support of \(\Gamma\) coincides with that of \(\bbeta\).

Let \(\mathcal S_{Y\mid \X}\) denote the central subspace, i.e., the intersection of all subspaces satisfying \eqref{sdr}. Under some mild conditions, \(\mathcal S_{Y\mid \X}\) exists and is unique \citep{yin2008successive}. In the single index framework, \(\dim(\mathcal S_{Y\mid \X})=1\) and
$\mathcal S_{Y\mid \X}=\mathrm{span}(\Gamma)=\mathrm{span}(\bbeta)$.

Now, let \(\mathcal A=\{j:\Gamma_j\neq 0\}\) denote the active set, with \(s=|\mathcal A|\ll p\), and write \(\mathcal A^c=\{1,\ldots,p\}\setminus\mathcal A\). After reordering coordinates if necessary, partition
\[
\Gamma= (\Gamma_{\mathcal A}^\top, \Gamma_{\mathcal A^c}^\top)^\top,
\quad \Gamma_{\mathcal A}\in\mathbb R^s, \quad \Gamma_{\mathcal A^c}=\bm 0\in\mathbb R^{p-s},
\]
and correspondingly \(\X=(\X_{\mathcal A}^\top,\X_{\mathcal A^c}^\top)^\top\). Since \(\Gamma^\top \X=\Gamma_{\mathcal A}^\top \X_{\mathcal A}\), condition \eqref{sdr} implies
\begin{align}
Y\indep \X_{\mathcal A^c}\mid \X_{\mathcal A}.
\label{sdr_vs}
\end{align}
Therefore, \(\X_{\mathcal A^c}\) contains no additional information about the conditional distribution of \(Y\) beyond \(\X_{\mathcal A}\), and variable selection may be viewed as recovery of the support of the central direction.

Notice that the conditional independence relation in \eqref{sdr} also implies
\begin{align}
F_{Y\mid\X}(y\mid \x)=F_{Y\mid \Gamma^\top \X}(y\mid \Gamma^\top \x),
\label{cdf_sdr}
\end{align}
where \(F_{Y\mid\X}(y\mid\x)=\Pr(Y\le y\mid \X=\x)\). Hence, for every quantile level \(\tau\in(0,1)\),
\begin{align}
Q_\tau(Y\mid \X=\x)=Q_\tau(Y\mid \Gamma^\top\X=\Gamma^\top\x),
\label{quantile_sdr}
\end{align}
where $Q_\tau(Y\mid \X=\x) = \min\{y: F_{Y\mid\X}(y\mid\x) \geq \tau \}$.
Thus, each conditional quantile depends on \(\X\) only through the sufficient index. This observation underlies quantile-based dimension reduction methods such as those of \cite{Kong2014quantile} and \cite{kim2019quantile}, as well as the penalized composite quantile approach of \cite{song2022robust} for robust variable selection in high-dimensional single index models.

In this paper, we focus on the conditional median, corresponding to \(\tau=1/2\). Restricting attention to the central quantile avoids direct reliance on tail quantiles, reduces computational complexity, and yields a bounded score function, which provides protection against extreme response residuals. More formally, we work under the single index model
\begin{align}
Y=g(\bbeta^\top \X)+\epsilon
\label{median_model}
\end{align}
that satisfies the following assumptions.

\begin{assumption}\label{ass_mederror}
The conditional median of the error, \(Q_{1/2}(\epsilon\mid \X)=0\), almost surely.
\end{assumption}

\noindent Assumption \ref{ass_mederror} allows the distribution of \(\epsilon\) to depend on \(\X\) and does not require existence of \(\E(\epsilon\mid \X)\). In particular, it accommodates heavy-tailed errors for which the conditional mean may be undefined. It also allows location--scale models of the form
$Y=g(\bbeta^\top \X)+\sigma(\X)\epsilon$, provided that \(Q_{1/2}(\epsilon\mid \X)=0\) and \(\sigma(\X)>0\). In this case, $Q_{1/2}(Y\mid \X)=g(\bbeta^\top \X)$, even when the conditional scale varies with \(\X\).

\begin{assumption}\label{ass_diff}
Let \(m(\x)=Q_{1/2}(Y\mid \X=\x)\). The function \(m(\x)\) is continuously differentiable on $\mathrm{Supp}(\X)$, the support of \(\X\). In addition, the conditional density \(f_{Y\mid \X}(y\mid \x)\) exists and is continuous in a neighborhood of \(y=m(\x)\); and it is uniformly bounded away from zero on  \(\mathrm{Supp}(\X)\). 
\end{assumption}

\noindent Assumption \ref{ass_diff} is a standard local regularity condition for median regression. The positivity of the conditional density at the median ensures local identification, while differentiability of \(m(\cdot)\) permits a first-order local approximation. Under Model \eqref{median_model} and Assumption \ref{ass_mederror},
\begin{align}
m(\x)=Q_{1/2}(Y\mid \X=\x)=g(\bbeta^\top \x),
\label{median_index}
\end{align}
and therefore
\begin{align}
\nabla m(\x)=g'(\bbeta^\top \x)\bbeta.
\label{median_gradient}
\end{align}
Hence, every population median gradient lies in the central subspace, and whenever \(g'(\bbeta^\top \x)\neq 0\), it has the same support as \(\bbeta\).

To estimate the local median gradient, fix \(\x_0\) and consider the population local linear median criterion
\begin{align}
\mathcal L_h(a,\bb;\x_0)
=
\E\left[
K_h(\X-\x_0)\left|Y-a-\bb^\top(\X-\x_0)\right|
\right],
\label{population_loss}
\end{align}
where \(K_h(\cdot)\) is a \(p\)-dimensional kernel with bandwidth \(h\). Here, we adopt a full space kernel which is potentially more resistant to leverage-type contamination compared to the index-based localization. Let \((a_0,\bb_0)\) denote any minimizer of \(\mathcal L_h(a,\bb;\x_0)\), i.e.,
\begin{align}
(a_0,\bb_0)\in \argmin_{a,\bb}\mathcal L_h(a,\bb;\x_0).
\label{population_median}
\end{align}
The corresponding \(\ell_1\)-penalized version is defined by
\begin{align}
(a_{0,\lambda},\bb_{0,\lambda})
\in
\argmin_{a,\bb}
\left\{
\mathcal L_h(a,\bb;\x_0)+\lambda\|\bb\|_1
\right\}.
\label{population_sparse_median}
\end{align}

\begin{remark}
A full $p$-dimensional kernel defined on \(\X\) downweights observations that are far from the target point in any predictor direction whereas the index-based kernel controls proximity only along \(\bbeta^\top\X\). 
\end{remark}
For example, suppose \(\beta=(1,0)^\top\), \(\x_0=(0,0)^\top\), and consider an observation \(X=(0,100)^\top\). Then \(\beta^\top X=\beta^\top \x_0=0\), so an index-based kernel assigns this observation a large weight even though it is extremely far from \(\x_0\) in the second coordinate. By contrast, a kernel defined on the entire \( \X\) space  assigns it negligible weight. Thus, full space localization can reduce the influence of leverage points that are not visible after projection onto the index.

\begin{theorem}\label{thm:sparse_median}
Suppose Assumptions \ref{ass_mederror}--\ref{ass_diff} hold, and \(g\) is continuously differentiable in a neighborhood of \(\bbeta^\top \x_0\). 
Then, as \(h\to 0\), 
\[
\bb_0
=
\nabla m(\x_0)+o(1)
=
g'(\bbeta^\top \x_0)\bbeta+o(1).
\]
Consequently,
\[
\mathrm{dist}\{\bb_0,\spc\}
=
\inf_{\mathbf v\in\spc}\|\bb_0-\mathbf v\|_2
\to 0.
\]
If in addition,  \(\bbeta\) is \(s\)-sparse with active set \(\mathcal A=\{j:\beta_j\neq 0\}\), then
$\mathrm{supp}\{\nabla m(\x_0)\}\subseteq \mathcal A$, with equality whenever \(g'(\bbeta^\top \x_0)\neq 0\).
\end{theorem}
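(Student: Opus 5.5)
We argue directly with the population local linear median criterion \eqref{population_loss}, using the rescaling $\X=\x_0+h\mathbf U$. Write $K_h(\mathbf u)=h^{-p}K(\mathbf u/h)$ with $K$ a symmetric, compactly supported (or rapidly decaying) density with nonsingular second-moment matrix $\mu_2 I_p$. Reparametrize the candidate by $a=m(\x_0)+\alpha$ and $\bb=\nabla m(\x_0)+\bm\delta/h$; the claim $\bb_0=\nabla m(\x_0)+o(1)$ then amounts to showing that the minimizing $\bm\delta$ is $o(h)$. Subtracting the $(a,\bb)$-free term $\E[K_h(\X-\x_0)|Y-m(\X)|]$ does not change the argmin. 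After conditioning on $\X$, the objective becomes $\E\{K_h(\X-\x_0)\,G(\X; r(\X)+\alpha+\bm\delta^\top(\X-\x_0)/h)\}$, where
\[
r(\x)=m(\x_0)+\nabla m(\x_0)^\top(\x-\x_0)-m(\x),
\qquad
G(\x;t)=\E\{|Y-m(\x)-t|-|Y-m(\x)|\mid \X=\x\}=2\int_0^t\{F_{Y\mid\X}(m(\x)+v\mid\x)-\tfrac12\}\,dv .
\]
By Assumption~\ref{ass_diff}, $r(\x)=o(\|\x-\x_0\|)$ uniformly on the shrinking support. Also, $G(\x;\cdot)$ is convex, vanishes at $0$, and behaves like $f_{Y\mid\X}(m(\x)\mid\x)\,t^2$ near $t=0$ with curvature bounded below. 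Here Assumption~\ref{ass_mederror} enters, since it gives $F(m(\x)\mid\x)=1/2$.

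The key steps run as follows. (i) In the $\mathbf U$ variables, $r(\x_0+h\mathbf U)=o(h)$, so the localized objective is $\E_{\mathbf U}\{f_X(\x_0+h\mathbf U)K(\mathbf U)\,G(\cdot;\alpha+\bm\delta^\top\mathbf U+o(h))\}$. (ii) Lower bound by strong convexity. Using the density lower bound $c>0$ on a neighborhood, we get $G(\x;t)\ge c\min(t^2,|t|\eta)$. Hence the objective is at least $c'\,\E_K\min\{(\alpha+\bm\delta^\top\mathbf U)^2,\eta|\alpha+\bm\delta^\top\mathbf U|\}-o(h)\cdot(|\alpha|+\|\bm\delta\|)$. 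This uses the Lipschitz bound $|G(\x;t+e)-G(\x;t)|\le|e|$ to absorb the remainder. (iii) Upper bound. At $\alpha=\bm\delta=0$ the objective is at most $\E K\,G(\cdot;o(h))=o(h^2)$. (iv) Comparison. Since $K$ has nondegenerate second moments, $\E_K(\alpha+\bm\delta^\top\mathbf U)^2\ge c''(\alpha^2+\|\bm\delta\|^2)$. Any minimizer must therefore satisfy $c''(\alpha^2+\|\bm\delta\|^2)\lesssim o(h)(|\alpha|+\|\bm\delta\|)+o(h^2)$, which gives $\|\bm\delta\|=o(h)$. The linear-growth branch of the min is handled the same way, because the quadratic branch dominates once the parameters are small. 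Consequently $\bb_0-\nabla m(\x_0)=\bm\delta/h=o(1)$. The same argument also covers non-uniqueness, since it applies to every minimizer in \eqref{population_median}.

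Steps (ii)--(iv) carry the main difficulty. The first-order Taylor remainder must be $o(h)$ uniformly over $\mathrm{supp}\,K$, which is why we want $K$ compactly supported or why a tail truncation is needed. We must also make sure that the $f_X(\x_0)>0$ factor, with $\x_0$ an interior point of $\mathrm{Supp}(\X)$, does not degenerate. If the kernel is not compactly supported, we would first split the expectation into $\|\mathbf U\|\le M$ and its complement, and use $|G(\x;t)|\le|t|$ on the tail.

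The remaining claims are immediate. By \eqref{median_index}--\eqref{median_gradient} and the chain rule, with $g$ being $C^1$ near $\bbeta^\top\x_0$, we have $\nabla m(\x_0)=g'(\bbeta^\top\x_0)\bbeta\in\spc=\mathrm{span}(\bbeta)$. Hence $\mathrm{dist}\{\bb_0,\spc\}\le\|\bb_0-\nabla m(\x_0)\|_2\to0$. Finally, the $j$th coordinate of $\nabla m(\x_0)$ is $g'(\bbeta^\top\x_0)\beta_j$. It vanishes for $j\notin\mathcal A$, so $\mathrm{supp}\subseteq\mathcal A$, and it is nonzero for all $j\in\mathcal A$ exactly when $g'(\bbeta^\top\x_0)\neq0$, which gives equality.
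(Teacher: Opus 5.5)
Your route differs from the paper's. The paper's proof consists of the chain-rule, distance and support arguments in your last paragraph, and it gets $\bb_0=\nabla m(\x_0)+o(1)$ simply by citing ``standard local linear M-estimation theory.'' You instead try to prove that step directly by a convexity/comparison argument. The strategy is sound, but the step that carries it, the absorption of the Taylor remainder in (ii), does not follow from the tool you cite. The slope perturbation is $\bm\delta/h$, so you need $\rho:=|\alpha|+\|\bm\delta\|_2=o(h)$. Write $t_0=\alpha+\bm\delta^\top\mathbf U$ and $r=r(\x_0+h\mathbf U)=o(h)$. Replacing $t_0+r$ by $t_0$ must then cost at most $o(h)\rho+o(h^2)$. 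The global bound $|G(\x;t+e)-G(\x;t)|\le|e|$ gives only an additive $-o(h)$, with no factor $\rho$. Fed into (iii)--(iv), it yields $\min(\rho,\rho^2)\lesssim o(h)$, i.e.\ $\|\bm\delta\|_2=o(h^{1/2})$ and $\|\bb_0-\nabla m(\x_0)\|_2=o(h^{-1/2})$, which proves nothing about the slope.

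The repair uses the local quadratic structure of $G$. This needs an upper bound $\bar f$ on $f_{Y\mid\X}(\cdot\mid\x)$ for $|y-m(\x)|\le\eta$, uniform in $\x$ near $\x_0$; you already use this implicitly in (iii) to get $G(\cdot;o(h))=o(h^2)$. Since $\partial_tG(\x;t)=2F_{Y\mid\X}(m(\x)+t\mid\x)-1$, you have $|\partial_tG|\le 2\bar f|t|$ for $|t|\le\eta$ and $|\partial_tG|\le1$ always. Hence $|G(\x;t+e)-G(\x;t)|\le C|e|(|t|+|e|)$ with $C=\max(2\bar f,1/\eta)$. Integrating against $K$ gives an error $o(h)\E_K|t_0|+o(h^2)\lesssim o(h)\rho+o(h^2)$, which is exactly the form you use in (iv).

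Your remark that the linear branch ``is handled the same way once the parameters are small'' assumes what is to be shown; make it precise. With $\phi(t)=\min(t^2,\eta|t|)$ one has $\phi(\lambda t)\ge\min(\lambda,\lambda^2)\phi(t)$, so $\E_K\phi(t_0)\ge c_1\min(\rho,\rho^2)$ by compactness of the unit sphere in $(\alpha,\bm\delta)$. This first rules out $\rho\ge1$ and then gives $\rho=o(h)$.

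With these repairs your argument is a self-contained proof of the step the paper only cites. It relies on regularity the theorem leaves implicit, which the paper's appeal to standard theory needs equally:
\begin{itemize}
\item a design density continuous and positive at $\x_0$;
\item a kernel with nondegenerate second moments;
\item two-sided bounds on $f_{Y\mid\X}$ near the median, uniform in $\x$ near $\x_0$.
\end{itemize}
Your centering by $\E[K_h(\X-\x_0)|Y-m(\X)|]$ is an improvement: under Cauchy-type errors the paper's $\mathcal L_h$ is infinite, and your normalized criterion is the one for which the argmin is meaningful.
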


\noindent Theorem~\ref{thm:sparse_median} shows that local median regression identifies the sufficient direction through the gradient of the conditional median. Penalization is then used to exploit sparsity in that direction, rather than to create the dimension reduction structure itself.

To aggregate the information in the local median gradients, define the population outer-product-of-gradients matrix
\begin{align}
\bLambda
=
\E\left\{\nabla m(\X)\nabla m(\X)^\top\right\}.
\label{population_opg}
\end{align}
By \eqref{median_gradient},
\begin{align}
\bLambda
=
\E\left[\{g'(\bbeta^\top \X)\}^2\right]\bbeta\bbeta^\top.
\label{population_opg_rank1}
\end{align}
If $\E\left[\{g'(\bbeta^\top \X)\}^2\right]>0$, then \(\bLambda\) has rank one and $\mathrm{span}(\bLambda)=\mathrm{span}(\bbeta)=\mathcal S_{Y\mid \X}$. Its leading eigenvector is $\Gamma_0=\bbeta/\|\bbeta\|_2$ (unique up to sign). Moreover, if \(\bbeta\) is \(s\)-sparse with active set \(\mathcal A\), then
\[
\Lambda_{jk}=0
\quad\text{whenever } j\notin \mathcal A \text{ or } k\notin \mathcal A,
\]
so \(\bLambda\) is supported on \(\mathcal A\times \mathcal A\). Thus, in the single index setting, recovering the support of the leading eigenvector of \(\bLambda\) is equivalent to recovering the active set.

The sparse central direction may therefore be characterized through the constrained Rayleigh quotient
\begin{align}
\Gamma_0 = \argmax_{\|\Gamma\|_2=1,\ \|\Gamma\|_0\le s} \Gamma^\top \bLambda \Gamma.
\label{sparse_rayleigh}
\end{align}
Equivalently, it may be characterized through the best sparse rank-one approximation
\begin{align}
(\delta_0,\Gamma_0)
=
\argmin_{\delta\ge 0,\ \|\Gamma\|_2=1,\ \|\Gamma\|_0\le s}
\|\bLambda-\delta \Gamma\Gamma^\top\|_F^2.
\label{sparse_rankone}
\end{align}
At the population level, $\delta_0 = \E\left[\{g'(\bbeta^\top \X)\}^2\right]\|\bbeta\|_2^2$,
and \(\Gamma_0=\bbeta/\|\bbeta\|_2\).

\section{Sample-level estimation}\label{sec:sample_est}
Let \(\{(\x_i,y_i): i=1,\ldots,n\}\) be an independent sample from the distribution of \((\X,Y)\). For each target point \(\x_i\), we estimate a local median slope by solving the weighted \(\ell_1\)-penalized median regression problem
\begin{equation}
(\hat a_i,\hat{\bb}_i)
=
\arg\min_{a,\bb}
\left\{
\sum_{j=1}^n
w_{ij}\left|y_j-a-\bb^\top(\x_j-\x_i)\right|
+
\lambda_n\|\bb\|_1
\right\},
\label{eq:local_regression}
\end{equation}
where \(\lambda_n>0\) controls sparsity in the local slope estimate. In implementation, we use normalized Gaussian radial weights
\begin{equation}
w_{ij}
=
\frac{\exp\left\{-\|\x_j-\x_i\|_2^2/\gamma^2\right\}}
{\sum_{k=1}^n \exp\left\{-\|\x_k-\x_i\|_2^2/\gamma^2\right\}},
\qquad j=1,\ldots,n,
\label{eq:weight_function}
\end{equation}
where \(\gamma>0\) controls the degree of localization. Prior to estimation, each predictor is standardized to have sample mean zero and sample variance one, so that the \(\ell_1\) penalty is applied on a common scale.

Define the matrix of local slope estimates by $\widetilde{\B}_n =(\hat{\bb}_1,\ldots,\hat{\bb}_n)^\top
\in\mathbb R^{n\times p}$. Under the single index conditional median model, the population counterpart of each local slope is asymptotically proportional to the same direction \(\bbeta\). Thus, the rows of \(\widetilde{\B}_n\) are expected to lie approximately in a common one-dimensional linear span up to sampling variability and local estimation error. This motivates extracting the central direction through a sparse rank-one approximation of \(\widetilde{\B}_n\).

Suppose \(\operatorname{rank}(\widetilde{\B}_n)=r\). Its singular value decomposition is
\begin{equation}
\widetilde{\B}_n
=
{\bf U}{\bf D}{\bf V}^\top,
\label{eq:svd}
\end{equation}
where ${\bf U}=(\bm u_1,\ldots,\bm u_r)\in\mathbb R^{n\times r}, \ {\bf V}=(\bm v_1,\ldots,\bm v_r)\in\mathbb R^{p\times r}$, have orthonormal columns, and ${\bf D}=\operatorname{diag}(d_1,\ldots,d_r), \ d_1\ge \cdots \ge d_r>0$. The vectors \(\bm v_k\) are the right singular vectors and may be interpreted as loading directions for \(\widetilde{\B}_n^\top\widetilde{\B}_n\). Since the central subspace is one-dimensional, we focus on its leading rank-one component.

Following the regularized singular value decomposition of \cite{shen2008sparse}, we estimate a sparse rank-one approximation by solving
\begin{equation}
(\hat{\bm u}_1,\hat{\bEta}_1)
=
\arg\min_{\bm u,\bEta}
\left\{
\left\|\widetilde{\B}_n-\bm u\bEta^\top\right\|_F^2
+
\xi\|\bEta\|_1
\right\}, \ \text{subject to } \|\bm u\|_2=1,
\label{eq:spca_objective}
\end{equation}
where \(\xi>0\) is a sparsity tuning parameter. Here, \(\bEta\) plays the role of a sparse loading vector associated with the leading right singular direction. The estimated sparse central direction is then $\hat{\Gamma} = \cfrac{\hat{\bEta}_1}{\|\hat{\bEta}_1\|_2}$,
and the estimated active set is
\begin{equation}
\widehat{\mathcal A}
=
\{j:\hat{\Gamma}_j\neq 0\}.
\label{eq:selected_set}
\end{equation}
The magnitude \(|\hat{\Gamma}_j|\) may be used as a heuristic measure of the contribution of predictor \(X_j\) to the estimated sufficient direction.

\subsection{Implementation issues}
\label{subsec:implementation}

The localization parameter \(\gamma\) governs the effective neighborhood used in each local median regression. Following \cite{LiArtemiouLi2011}, we set
\begin{equation}
\hat{\gamma}
=
\binom{n}{2}^{-1}
\sum_{1\le i<j\le n}\|\x_i-\x_j\|_2,
\label{eq:tau_hat}
\end{equation}
that is, the average pairwise Euclidean distance among the standardized predictors. This value is used in \eqref{eq:weight_function} throughout the estimation procedure.

After the local slopes have been estimated, the sparse rank-one approximation in \eqref{eq:spca_objective} can be computed using, for example, the \texttt{ssvd()} function in the \texttt{irlba} package in $R$, which implements the regularized low-rank approximation of \cite{shen2008sparse}. If the desired number \(s\) of active predictors is known in advance, it can be imposed directly in the sparse SVD step.

Otherwise, we use a practical incremental rule to select \(s\). Let \(\hat d_1(s)\) denote the leading sparse singular value obtained when at most \(s\) nonzero loadings are permitted. Starting from \(s=1\), define
\begin{equation}
\delta_s
=
\frac{\hat d_1(s+1)-\hat d_1(s)}{\hat d_1(s)}.
\label{eq:incremental_s}
\end{equation}
We increase \(s\) until \(\delta_s<\varrho\), where \(\varrho>0\) is a prespecified minimum contribution threshold. This selects the smallest sparsity level for which adding one more nonzero loading yields only a negligible increase in the leading sparse singular component. The complete procedure is summarized in Algorithm 1 in the supplemental appendix.

\section{Theoretical properties}\label{sec:theory}
We establish the theoretical properties under the following regularity conditions in addition to Assumptions \ref{ass_mederror} and \ref{ass_diff}.

\begin{itemize}
\item[(C1)] The support \(\mathrm{Supp}(\X) \subset \R^p \) is compact and convex, and \(\x_0\) is an interior point of \(\mathrm{Supp}(\X)\). The predictors are standardized so that \(\E(X_j)=0\) and \(\E(X_j^2)=1\) for \(j=1,\ldots,p\).

\item[(C2)] The conditional median function \(m(\x)\) has bounded continuous derivatives  on \(\mathrm{Supp}(\X)\) up to order three. Also, \(f_{Y\mid \X}(y\mid \x)\) and its derivative with respect to \(y\) are uniformly bounded in a neighborhood of \(y=m(\x)\).

\item[(C3)] The kernel \(K(\cdot)\) is a bounded, symmetric density with bounded first derivative and finite moments of sufficiently high order.

\item[(C4)] As \(n\to\infty\), \(h\to 0\) and $\cfrac{nh^{p+2}}{C_n}\to\infty$, where $C_n=(\log n)^{C'}$ for some constant \(C'>0\). 

\item[(C5)] The index coefficient \(\bbeta\) is \(s\)-sparse, where \(s=s_n\) may diverge with \(n\), subject to $\cfrac{s\log p}{nh^{p+2}}\to 0$.

\item[(C6)] Let \(\Z=\X-\x_0\) and define $\Sig_h(\x_0)=\E\left\{K_h(\X-\x_0)\Z\Z^\top\right\}$. There exists a constant \(\kappa_0>0\) such that $\delta^\top\Sig_h(\x_0)\delta
\ge \kappa_0\|\delta_{\mathcal A}\|_2^2$ uniformly over \(\x_0\in\mathrm{Supp}(\X)\) and all \(\delta\) satisfying
$\|\delta_{\mathcal A^c}\|_1\le c_0\|\delta_{\mathcal A}\|_1$ for some fixed \(c_0>0\).
\end{itemize}

Conditions (C1)--(C3) are standard local smoothing assumptions, see \cite{fan1996local} and \cite{fan2016multivariate}. Condition (C4) balances localization and effective sample size with the factor \(h^2\) reflecting the estimation of a first derivative rather than a local intercept. Condition (C5) imposes sparsity on the single index direction, and Condition (C6) is a local restricted eigenvalue condition required for Lasso and Dantzig penalties to ensure sufficient curvature of the weighted loss along sparse directions. See \cite{candes2007} and \cite{bickelLasso2009}.

\subsection{Rate of convergence of the local median slope}

For global \(\ell_1\)-penalized quantile regression, \cite{bellonil12011} established convergence rates of order \((s\log p/n)^{1/2}\). For our localized setting, the relevant empirical score for the slope component has effective sample size of order \(nh^p\), while the additional factor \(\Z_i=\X_i-\x_0\) in the local linear slope contributes an extra \(h\)-scaling \citep{fan2016multivariate}.

The choice of \(\lambda_n\) is driven by the sampling variability of the empirical slope score. To determine the appropriate order of \(\lambda_n\), consider the empirical score for the slope component evaluated at the true localized parameter,
\[
S_n=\frac{1}{n}\sum_{i=1}^n K_h(\X_i-\x_0)(\X_i-\x_0)\psi_{1/2}(\epsilon_i).
\]
Because \(K_h(\cdot)=h^{-p}K(\cdot/h)\) and \(\X_i-\x_0\) is of order \(h\) within the local neighborhood,  
\[
\E\left[K_h(\X-\x_0)^2(\X-\x_0)(\X-\x_0)^\top\psi_{1/2}(\epsilon)^2\right]
=O(h^{2-p}).
\]
Thus, each coordinate of \(S_n\) has variance of order \(h^{2-p}/n\), and a concentration argument yields
\[
\|S_n\|_\infty = O_p\left(\sqrt{\frac{h^{2-p}\log p}{n}}\right).
\]
We therefore choose the tuning parameter \(\lambda_n\) to be of this order so that it dominates the random variation of the empirical slope score. 

Let \(\bb_0(\x_0)=\nabla m(\x_0)\),  \(\bb_h(\x_0)\) denote the population local median slope at bandwidth \(h\), and \(\hat{\bb}(\x_0)\) denote the corresponding kernel-weighted local \(\ell_1\)-penalized median slope estimator at \(\x_0\).

\begin{theorem}\label{thm:beta_consistency}
Suppose Assumptions \ref{ass_mederror}--\ref{ass_diff} and Conditions \textup{(C1)}--\textup{(C6)} hold. For an interior point \(\x_0\in\mathrm{Supp}(\X)\), let $\bb_0(\x_0)=\nabla m(\x_0)$. If regularizing parameter $\lambda_n \asymp \bigg(\cfrac{h^{2-p}\log p}{n}\bigg)^{1/2}$, then
\begin{align}
\|\hat{\bb}(\x_0)-\bb_0(\x_0)\|_2 = O_p\left\{
h^2 + \left(\frac{s\log p}{nh^{p+2}}\right)^{1/2}
\right\}.
\label{eq:local_rate}
\end{align}
\end{theorem}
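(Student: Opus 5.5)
The plan is to split the error into a deterministic bias term and a stochastic term,
\[
\|\hat{\bb}(\x_0)-\bb_0(\x_0)\|_2\le \|\bb_h(\x_0)-\bb_0(\x_0)\|_2+\|\hat{\bb}(\x_0)-\bb_h(\x_0)\|_2 ,
\]
and bound each separately. The first term is handled by a sharpened version of Theorem~\ref{thm:sparse_median}; the second by a localized version of the $\ell_1$-penalized quantile regression argument of \cite{bellonil12011}. Throughout, I reparametrize the slope as $\theta=h(\bb-\bb_h)$ and the intercept as $\alpha=a-a_h$. With this scaling the rescaled design $\Z_i/h$ is $O(1)$ on the kernel support, so the local criterion behaves like a standard quantile regression with effective sample size $nh^p$.

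\emph{Bias term.} Write the population first-order condition for $(a_h,\bb_h)$ in \eqref{population_loss}:
\[
\E[K_h(\Z)(1,\Z^\top)^\top\{F_{Y|\X}(a_h+\bb_h^\top\Z\mid\X)-\tfrac12\}]=0 .
\]
Then Taylor expand $F_{Y|\X}$ around $m(\X)$ using the bounded density and derivative in (C2), and expand $m(\X)=m(\x_0)+\nabla m(\x_0)^\top\Z+\frac12\Z^\top\nabla^2 m(\x_0)\Z+O(\|\Z\|^3)$. The symmetry of $K$ in (C3) makes odd moments vanish, and the $h^{-1}$ scaling of the slope equation against the $O(h^3)$ cubic remainder leaves a slope bias of order $h^2$. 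A lower-order $O(h)$ term can arise from the gradient of the design density and of $f_{Y|\X}$. It cancels after centering, because the local linear fit absorbs the intercept, exactly as in \cite{fan1996local}. Invertibility of the localized Hessian $\E\{K_h f_{Y|\X}(m\mid\X)(1,\Z^\top/h)^\top(1,\Z^\top/h)\}$ follows from Assumption~\ref{ass_diff} (density bounded below) and (C6). This gives $\|\bb_h-\bb_0\|_2=O(h^2)$. Sparsity is not needed here, only that $\bb_0$ is supported on $\mathcal A$ (Theorem~\ref{thm:sparse_median}). Since $\bb_h$ itself need not be exactly sparse, I would instead center the stochastic analysis at $\bb_0$ and carry the $O(h^2)$ approximation error into the score. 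This is the cleaner route, and I would adopt it.

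\emph{Stochastic term.} Let $\Delta=\hat{\bb}-\bb_0$. First, the score bound from the preceding discussion gives
\[
\|S_n\|_\infty=O_p\{(h^{2-p}\log p/n)^{1/2}\},
\]
using Bernstein's inequality for bounded summands with variance $h^{2-p}/n$ and a union bound over $p$ coordinates. This step needs $\log p/(nh^p)\to0$, which is implied by (C5). The bias contribution to the score is $O(h^{2}\cdot h)$ per coordinate after the scaling and is absorbed. With $\lambda_n\ge 2\|S_n\|_\infty$, the usual basic inequality from optimality of $\hat{\bb}$ places $\Delta$ in the cone $\|\Delta_{\mathcal A^c}\|_1\le 3\|\Delta_{\mathcal A}\|_1$, up to a bias slack. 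Second, a lower bound on the excess check loss is needed. Knight's identity splits the empirical loss difference into the linear score term plus $\sum_i K_h\int_0^{\Z_i^\top\Delta}\{\mathbb I(\epsilon_i\le t)-\mathbb I(\epsilon_i\le0)\}dt$. The expectation of the latter is at least $c\,\Delta^\top\Sig_h\Delta$ for $\|\Delta\|$ small, by the density lower bound, and then at least $c\kappa_0\|\Delta_{\mathcal A}\|_2^2$ by (C6), with a linear-growth regime beyond that, as in \cite{bellonil12011}.

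\emph{Main obstacle.} The difficult step is uniform control of the centered empirical process
\[
\sup_{\Delta\in\text{cone},\,\|\Delta\|\le t}\Big|\tfrac1n\textstyle\sum_i(1-\E)[K_h(\Z_i)\rho_{1/2}(\epsilon_i-\Z_i^\top\Delta)-K_h(\Z_i)\rho_{1/2}(\epsilon_i)]\Big| .
\]
I would handle it by symmetrization and the contraction principle, since the check loss is $1$-Lipschitz. This reduces the process to $t\,\|\frac1n\sum_i\varepsilon_iK_h(\Z_i)\Z_i\|_\infty\sqrt{s}$, using $\|\Delta\|_1\le 4\sqrt{s}\|\Delta_{\mathcal A}\|_2$ on the cone, and is of order $t\sqrt{s h^{2-p}\log p/n}$. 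A peeling argument over $t$ then gives the result, with (C4)'s logarithmic factor $C_n$ covering the bracketing/union costs and the kernel's unbounded $h^{-p}$ factor. Combining the curvature lower bound with the process bound and $\lambda_n\sqrt{s}\|\Delta_{\mathcal A}\|_2$ yields
\[
\kappa_0\|\Delta_{\mathcal A}\|_2^2\lesssim \sqrt{s h^{2-p}\log p/n}\,\|\Delta\|_2 ,
\]
with $\kappa_0$ reflecting the $h^2$ scale of $\Sig_h$ (i.e. $\kappa_0\asymp h^2$ under the natural scaling). Solving gives $\|\Delta\|_2\lesssim(s\log p/(nh^{p+2}))^{1/2}$. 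The cone condition transfers this bound from $\Delta_{\mathcal A}$ to all of $\Delta$. Condition (C5) ensures this rate tends to zero, so the local quadratic regime of the check loss applies. Adding the $O(h^2)$ bias gives \eqref{eq:local_rate}.
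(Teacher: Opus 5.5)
Your outline has the same skeleton as the paper's proof. Both use an $O(h^2)$ local-linear bias, the Bernstein-plus-union bound on $\|S_n\|_\infty$, a cone from the basic inequality, and curvature of order $h^2$, leading to $\sqrt{s}\lambda_n/h^2$. You make explicit that $\kappa_0$ in (C6) has to be read as $\asymp h^2$; the paper does this only implicitly, through $H_h(\x_0)=h^2H_0(\x_0)+o(h^2)$. Where you depart from the paper, you improve on it:
\begin{itemize}
\item You center the stochastic analysis at the exactly sparse $\bb_0$ and carry the local-linear approximation error as a score bias. This sidesteps a weak point of the paper, which applies the $\mathcal A$-cone to $\hat{\bb}-\bb_h$ although $\bb_h$ need not be supported on $\mathcal A$. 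In addition, the paper's $S_n$ is built from $\psi_{1/2}(\epsilon_i)$, so it is not the score at $(a_h,\bb_h)$.
\item Knight's identity with symmetrization, contraction and peeling supplies the empirical-process control that the paper replaces by an appeal to a quadratic expansion.
\end{itemize}

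The gap is in your final step. (C6) lower-bounds the curvature only by $\|\Delta_{\mathcal A}\|_2^2$. On the cone, the linear terms are bounded by $\lambda_n\|\Delta\|_1\le 4\lambda_n\sqrt{s}\|\Delta_{\mathcal A}\|_2$. So the chain you describe yields $\|\Delta_{\mathcal A}\|_2\lesssim\sqrt{s}\lambda_n/h^2$, not your display with $\|\Delta\|_2$ on the right.

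The cone then gives only $\|\Delta_{\mathcal A^c}\|_2\le\|\Delta_{\mathcal A^c}\|_1\le 3\sqrt{s}\|\Delta_{\mathcal A}\|_2$. Hence $\|\Delta\|_2\lesssim s\lambda_n/h^2$, which is off by a factor $\sqrt{s}$, and $s$ may diverge under (C5). To reach the stated rate you need two ingredients:
\begin{itemize}
\item a restricted eigenvalue condition over $\mathcal A\cup T_1$, where $T_1$ indexes the $s$ largest entries of $\Delta_{\mathcal A^c}$ (the $\mathrm{RE}(s,s,c_0)$ condition of \cite{bickelLasso2009}, or the $\kappa_m$ of \cite{bellonil12011});
\item the bound $\|\Delta_{(\mathcal A\cup T_1)^c}\|_2\le\|\Delta_{\mathcal A^c}\|_1/\sqrt{s}$.
\end{itemize}
The paper's proof skips exactly the same point, so this is really a weakness of (C6) as stated.

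A smaller point concerns the score bias at $\bb_0$. It is $O(h^4)$ per coordinate in the original scale, and it is dominated by $\lambda_n$ only if $h^{p+6}\lesssim \log p/n$, which is not assumed. In general it must be carried as an approximation error in the basic inequality, as for approximately sparse $\ell_1$-quantile regression. Writing $\bm{r}_h$ for the mean of the slope score at $\bb_0$, its contribution is of order $\|\bm{r}_h\|_2/h^2$, and this is what should produce the $h^2$ term. Once you center at $\bb_0$, there is no separate $\|\bb_h-\bb_0\|_2$ to add at the end.
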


\noindent Theorem \ref{thm:beta_consistency} decomposes the local estimation error into the usual smoothing bias and a stochastic component determined by sparsity, dimensionality, and the effective local sample size. 

Up to multiplicative constants, the squared error corresponding to \eqref{eq:local_rate} is
\begin{align}
\mathrm{MSE}(h)
\asymp
h^4+\frac{s\log p}{nh^{p+2}}.
\label{eq:mse_rate}
\end{align}
Balancing the two terms yields
\begin{align}
h_{\mathrm{opt}}
\asymp (s\log p/n)^{1/(p+6)},
\label{eq:optimal_h}
\end{align}
with a corresponding
\begin{align}
\mathrm{MSE}(h_{\mathrm{opt}})
= O\{\left(s\log p/ n\right)^{4/(p+6)}\}.
\label{eq:optimal_mse}
\end{align}
This rate reflects localization in the ambient predictor space and highlights the role of subsequent low-rank aggregation of the local gradients.

\subsection{Aggregate support recovery}\label{subsec:aggregate_recovery}

We next study recovery of the active set from the aggregated matrix of local gradients. For the observed design points \(\x_1,\ldots,\x_n\), define the population gradient matrix
\begin{align}
\B_{0n} = \bigl(\bb(\x_1),\ldots,\bb(\x_n)\bigr)^\top, \text{ where } \bb(\x_i)=\nabla m(\x_i).
\label{eq:B0}
\end{align}
Under model \ref{eq:median-sim}, $\bb(\x_i)=g'(\bbeta^\top \x_i)\bbeta$. Let \(c_i=g'(\bbeta^\top \x_i)\) and \(\bm c_n=(c_1,\ldots,c_n)^\top\), so that
\begin{align}
\B_{0n} = \bm c_n\bbeta^\top.
\label{eq:rank_one_gradient}
\end{align}
Hence, \(\B_{0n}\) has rank one whenever \(\bm c_n\neq 0\). If \(\|\bbeta\|_2=1\), then $d_{1n}=\|\bm c_n\|_2,
\bu_1=\cfrac{\bm c_n}{\|\bm c_n\|_2}, \bv_1=\bbeta$, and therefore $\B_{0n}=d_{1n}\bu_1\bv_1^\top, \text{ where } \|\bu_1\|_2=\|\bv_1\|_2=1$.

Let
\[
\widetilde{\B}_n = (\hat{\bb}_1,\ldots,\hat{\bb}_n)^\top =\B_{0n} + \bm\varepsilon_n,
\]
where \(\bm\varepsilon_n\) is the matrix of local estimation errors. Although some local gradients may be weak, all rows of \(\B_{0n}\) share the same direction \(\bbeta\). And if $\mu_g=\E\left[\{g'(\bbeta^\top\X)\}^2\right]>0$, then $\cfrac{d_{1n}^2}{n} = \cfrac{1}{n}\displaystyle\sum_{i=1}^n \{g'(\bbeta^\top\x_i)\}^2 \overset{p}{\to}\mu_g$. Thus, \(d_{1n}=O_p(\sqrt n)\). Therefore, for each active coordinate \(j\in\mathcal A\), the aggregate signal grows at rate \(d_{1n}|\beta_j|\), even if \(g'(\bbeta^\top\x_i)\) is small for a substantial fraction of the design points. The following proposition provides a high-level sufficient condition for exact support recovery from the sparse rank-one approximation step.

\begin{proposition}\label{prop:aggregate_selection}
Suppose the conditions of Theorem \ref{thm:beta_consistency} hold and $\B_{0n}=d_{1n}\bu_1\bv_1^\top$, where $\bv_1=\bbeta, \ \|\bbeta\|_2=1$. Let \(\mathcal A=\mathrm{supp}(\bbeta)\) with \(|\mathcal A|=s\), and suppose $\bu_1$ satisfies
$|\hat{\bu}_1^\top \bu_1|\overset{p}{\to} 1$. Assume further that $\|\bm\varepsilon_n^\top \hat{\bu}_1\|_\infty = O_p(r_n)$ for some sequence \(r_n\to 0\), and that the tuning parameter \(\xi_n\) for the sparse loading step is such that
$r_n=o(\xi_n) \text{ and } \xi_n=o\!\left(d_{1n}\min_{j\in\mathcal A}|\beta_j|\right)$. Then $\Pr(\widehat{\mathcal A}=\mathcal A)\to 1$.
\end{proposition}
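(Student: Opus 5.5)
The plan is to exploit the closed form of the loading update in the regularized SVD of \cite{shen2008sparse}. For fixed unit vector $\bu$, the objective in \eqref{eq:spca_objective} separates across coordinates:
\[
\|\widetilde{\B}_n-\bu\bEta^\top\|_F^2+\xi_n\|\bEta\|_1
=\|\widetilde{\B}_n\|_F^2+\sum_{j=1}^p\left\{\eta_j^2-2\eta_j(\widetilde{\B}_n^\top\bu)_j+\xi_n|\eta_j|\right\}.
\]
Hence the minimizer is the soft-thresholding map
\[
\hat\eta_{1j}=\mathrm{sign}(z_j)\,\bigl(|z_j|-\xi_n/2\bigr)_+,\qquad \bm z=\widetilde{\B}_n^\top\hat\bu_1 .
\]
Since $\hat\Gamma$ is a rescaling of $\hat\bEta_1$, we have $\widehat{\mathcal A}=\{j:|z_j|>\xi_n/2\}$. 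Exact recovery therefore reduces to two events, both holding with probability tending to one:
\[
\max_{j\notin\mathcal A}|z_j|\le \xi_n/2 \qquad\text{and}\qquad \min_{j\in\mathcal A}|z_j|>\xi_n/2 .
\]

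Next we decompose $\bm z$ using $\widetilde{\B}_n=\B_{0n}+\bm\varepsilon_n$ and $\B_{0n}=d_{1n}\bu_1\bbeta^\top$:
\[
\bm z=d_{1n}(\bu_1^\top\hat\bu_1)\,\bbeta+\bm\varepsilon_n^\top\hat\bu_1 .
\]

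For $j\notin\mathcal A$, $\beta_j=0$, so $|z_j|\le\|\bm\varepsilon_n^\top\hat\bu_1\|_\infty=O_p(r_n)$. Because $r_n=o(\xi_n)$, this is below $\xi_n/2$ with probability tending to one, uniformly in $j$, since the bound is a sup-norm bound.

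For $j\in\mathcal A$,
\[
|z_j|\ge d_{1n}|\bu_1^\top\hat\bu_1|\min_{k\in\mathcal A}|\beta_k|-O_p(r_n).
\]
Since $|\bu_1^\top\hat\bu_1|\overset{p}{\to}1$, the first term is at least $\tfrac12 d_{1n}\min_{k\in\mathcal A}|\beta_k|$ with probability tending to one. By the assumption $\xi_n=o(d_{1n}\min_{k\in\mathcal A}|\beta_k|)$, this eventually exceeds $\xi_n$. Also $r_n=o(\xi_n)$, so $O_p(r_n)\le \xi_n/4$ with probability tending to one. Hence $\min_{j\in\mathcal A}|z_j|>\xi_n/2$ with probability tending to one. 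Intersecting the two events gives $\Pr(\widehat{\mathcal A}=\mathcal A)\to1$.

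The main obstacle is that $\hat\bu_1$ is the output of the joint alternating minimization, not a fixed vector. The argument must therefore be read at the stationary point, where $\hat\bEta_1$ is the soft-threshold of $\widetilde{\B}_n^\top\hat\bu_1$. This is exactly the first-order (KKT) condition in $\bEta$ at the solution, so it holds whenever $(\hat\bu_1,\hat\bEta_1)$ is a minimizer or a fixed point. The hypotheses are also stated directly for $\hat\bu_1$, which avoids any independence issue between $\hat\bu_1$ and $\bm\varepsilon_n$.

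A minor point is the sign and scale ambiguity of $\bu_1$. This is handled by working with $|\bu_1^\top\hat\bu_1|$ throughout, and by noting that the support is invariant to rescaling $\hat\bEta_1$ into $\hat\Gamma$. Finally, we should check that $\hat\bEta_1\neq 0$, so that $\hat\Gamma$ is well defined. This follows from the active-coordinate bound above, since on that event every $j\in\mathcal A$ has $\hat\eta_{1j}\neq0$.
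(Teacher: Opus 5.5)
Your proof is correct and follows essentially the same route as the paper: you decompose $\widetilde{\B}_n^\top\hat\bu_1$ as $d_{1n}(\bu_1^\top\hat\bu_1)\bbeta+\bm\varepsilon_n^\top\hat\bu_1$ and compare inactive coordinates (bounded by $O_p(r_n)=o_p(\xi_n)$) and active coordinates (at least of order $d_{1n}\min_{j\in\mathcal A}|\beta_j|$) against the threshold. Your explicit derivation that the loading step soft-thresholds at level $\xi_n/2$, so that $\widehat{\mathcal A}=\{j:|z_j|>\xi_n/2\}$ at any minimizer or fixed point, makes precise the ``threshold level'' that the paper's proof uses without deriving it.
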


Proposition \ref{prop:aggregate_selection} shows that exact support recovery can be achieved at the aggregate level without requiring exact support recovery in every local penalized median regression. The local \(\ell_1\) penalty stabilizes high-dimensional estimation of the local median gradients, whereas the sparse rank-one approximation identifies coordinates that persist in the common sufficient direction.

\begin{remark}
This distinction is particularly important when \(g'(\bbeta^\top \x)\) is small on a non-negligible part of the predictor space. In such regions, some individual local fits may fail to recover the active coordinates because the local signal is comparable to the local penalty level. Nevertheless, as long as $\mu_g=\E[\{g'(\bbeta^\top\X)\}^2]>0$, the common direction remains identifiable through aggregation of the local gradients.
\end{remark}

\section{Simulation Study}\label{sec:simulation}
\subsection{Active subset recovery}\label{subsec:subset_recovery}
This subsection examines the variable selection performance of the proposed smOPG procedure under several contamination mechanisms. We compare it with a number of competing \(\ell_1\)-penalized methods, including penalized ordinary least squares (OLS), penalized rank regression (Rank), sparse sliced inverse regression (SIR), sparse minimum average variance estimation (SMAVE), and penalized cumulative quantile regression (CQR). To examine performance across different regimes, we consider both a low-dimensional setting (\(p<n\)) and a high-dimensional setting (\(p \gg n\)). Also, to facilitate comparison with methods that rely on monotonicity or linearity-type conditions, we focus on monotone link functions and elliptically distributed predictors. SMAVE and CQR are omitted from the high-dimensional setting because the former is designed for \(p<n\), whereas the latter becomes computationally prohibitive when \(p\) is very large.

All analyses are conducted in \textsf{R}. We implement OLS- and Rank-Lasso using the \texttt{ncvreg} package, sparse SIR using the \texttt{LassoSIR} package, SMAVE using the \texttt{MAVE} and \texttt{glmnet} packages, and CQR using the \texttt{cqrReg} package. For OLS, Rank, SIR, and SMAVE, tuning is selected by the default cross-validation procedures provided in the corresponding packages, which utilizes 100 values of $\lambda$. For CQR, we used the default value of 1 as suggested by the authors of the \texttt{cqrReg} package.\footnote{The \texttt{cqrReg} package did not have readily available functions for tuning $\lambda$ and the authors noted that the performance of some of the functions such as the ADMM depend on good least squares estimation.} The number of slices in SIR is fixed at 10, and the quantile levels for CQR are fixed at \(\tau=(0.1,0.2,\ldots,0.9)\) in all settings. The proposed smOPG procedure is implemented using the \texttt{conquer} and \texttt{irlba} packages. In the \texttt{conquer} package, $\lambda$ is restricted to the interval \([0,1]\). Therefore, we use choose \(\lambda\) from \((0.01,0.05,0.1,0.25,0.5,0.9)\) based on cross-validation. In fact, preliminary experiments indicated that values of \(\lambda>0.2\) typically shrink nearly all local slope estimates to zero, which in turn yields an empty or nearly empty selected support after aggregation. We maintained the given set to ensure fairness and allow for worse case scenarios.

The data are generated as follows. We consider \((n,p)\in\{(50,10),(100,1000)\}\), and fix the contamination proportion at \(5\%\). Thus, \(n_0=0.95n\) observations are generated from the baseline model, while the remaining \(n_1=n-n_0\) observations are contaminated. Let
\[
\bbeta= (\bbeta_{\mathcal A}^\top, \ \bbeta_{\mathcal A^c}^\top)^\top,
\quad
\bbeta_{\mathcal A}=(-2,-1.5,0.8,1.5,3)^\top,
\qquad
\bbeta_{\mathcal A^c}=\bm 0\in\mathbb R^{p-5}.
\]
The coordinates of \(\bbeta\) are randomly permuted so that each predictor is equally likely to belong to the active set.

Let \(\Sig\in\mathbb R^{p\times p}\) have entries \((\Sig)_{kl}=0.75^{|k-l|}\), \(k,l=1,\ldots,p\). For \(i=1,\ldots,n_0\), the uncontaminated predictors are generated as \(\x_i\sim N_p(\bm 0,\Sig)\), and the response is generated from one of the following models:
\begin{enumerate}
\item[I.] \(y_i = 0.5+\bbeta^\top\x_i+0.8\epsilon_i\),
\item[II.] \(y_i = \exp(0.05\bbeta^\top\x_i+1)+(0.2+0.01x_{1i})\epsilon_i\),
\item[III.] \(y_i = 0.5(\bbeta^\top\x_i)^3+0.8\epsilon_i\),
\end{enumerate}
where \(\epsilon_i\sim N(0,1)\). For the contaminated observations \(i=n_0+1,\ldots,n\), the response is generated from the same model, but the predictors and/or errors are contaminated according to one of the following mechanisms:
\begin{itemize}
\item \textbf{predictor contamination only:} \(\x_i\) follows a \(p\)-variate \(t\) distribution with 10 degrees of freedom, location vector \(\bm\mu=(5, \ldots,5)\in\R^p\), and scatter matrix \(\Sig\), while \(\epsilon_i\sim N(0,1)\);
\item \textbf{response contamination only:} \(\x_i\sim N_p(\bm 0,\Sig)\), while \(\epsilon_i\sim \mathrm{Cauchy}(0,1)\);
\item \textbf{combined predictor and response contamination:} \(\x_i\) follows the same multivariate \(t\) distribution as above, and \(\epsilon_i\sim \mathrm{Cauchy}(0,1)\).
\end{itemize}

The strong correlation among nearby predictors allows us to assess the behavior of each method under correlated covariates, especially with respect to false positives and support instability. Models I and III have monotone regression links and homoscedastic errors, so methods that rely on monotonicity are expected to be relatively more competitive in these settings. Model II is heteroscedastic, and is therefore expected to be more challenging, particularly for methods whose performance depends strongly on mean structure or response ordering.

For all models and settings, we evaluate performance using the average selected model size
$\hat s=\bigl|\{j\in\{1,\ldots,p\}:\hat b_j\neq 0\}\bigr|$,
the true positive rate (TPR) and false positive rate (FPR) respectively given by
\[
\mathrm{TPR} = \cfrac{|\{j:\hat b_j\neq 0,\ \beta_j\neq 0\}|}{s}, \qquad \mathrm{FPR}=\cfrac{|\{j:\hat b_j\neq 0,\ \beta_j=0\}|}{p-s}.
\]
Each setting is replicated 100 times.

\begin{table}[t!]
\caption{Mean (standard deviation) of the active set size $\hat s$, true positive rate (TPR), and false positive rate (FPR) for $(n,p) = (50, 10)$ based on 100 random samples. }
\label{tab:sims_low_dim}
\resizebox{\textwidth}{!}{
\begin{tabular}{cl|ccc|ccc|ccc}
\hline
\multirow{2}{*}{Model} & \multicolumn{1}{c|}{\multirow{2}{*}{Method}} & \multicolumn{3}{c|}{Predictor contamination only}  & \multicolumn{3}{c|}{Response contamination only} & \multicolumn{3}{c}{Combined contamination} \\ \cline{3-11} 
& \multicolumn{1}{c|}{}  & $\hat s$ & TPR  & FPR  & $\hat s$ & TPR & FPR & $\hat s$ & TPR & FPR  \\ \hline
\multirow{6}{*}{I} & CQR & 10.00 (0.00) & 1.00 (0.00) & 1.00 (0.00) & 10.00 (0.00) & 1.00 (0.00) & 1.00 (0.00) & 10.00 (0.00)   & 1.00 (0.00)   & 1.00 (0.00)   \\
& OLS & 8.15 (1.36)  & 1.00 (0.00) & 0.63 (0.27) & 7.83 (2.04)  & 0.95 (0.18) & 0.61 (0.3)  & 7.83 (2.04)    & 0.95 (0.18)   & 0.61 (0.3)    \\
& RANK & 7.95 (1.47)  & 0.98 (0.07) & 0.61 (0.28) & 8.08 (1.44)  & 0.98 (0.07) & 0.64 (0.27) & 8.08 (1.44)    & 0.98 (0.07)   & 0.64 (0.27)   \\
& SIR & 6.47 (2.3)   & 0.76 (0.25) & 0.53 (0.29) & 6.26 (2.22)  & 0.74 (0.24) & 0.51 (0.28) & 6.26 (2.22)    & 0.74 (0.24)   & 0.51 (0.28)   \\
& SMAVE & 9.74 (0.48)  & 1.00 (0.00) & 0.95 (0.1)  & 9.73 (0.67)  & 1.00 (0.02) & 0.95 (0.13) & 9.73 (0.67)    & 1.00 (0.02)   & 0.95 (0.13)   \\
& smOPG  & {\bf 4.93} (1.04)  & 0.94 (0.16) & { 0.05} (0.11) & {\bf 4.94} (1.07)  & 0.93 (0.16) & { 0.06} (0.11) & {\bf 4.94} (1.07)    & 0.93 (0.16)   & {0.06} (0.11)   \\ \hline
\multirow{6}{*}{II}    & CQR & 10.00 (0.00) & 1.00 (0.00) & 1.00 (0.00) & 10.00 (0.00) & 1.00 (0.00) & 1.00 (0.00) & 10.00 (0.00)   & 1.00 (0.00)   & 1.00 (0.00)   \\
& OLS & 6.3 (2.48)   & 0.75 (0.24) & 0.52 (0.31) & {\bf 4.83} (2.95)  & 0.57 (0.33) & 0.39 (0.32) & {\bf 4.83} (2.95)    & 0.57 (0.33)   & 0.39 (0.32)   \\
& RANK & 5.53 (2.4)   & 0.69 (0.24) & 0.42 (0.3)  & {\bf 5.17} (2.51)  & 0.64 (0.28) & 0.39 (0.28) & {\bf 5.17} (2.51)    & 0.64 (0.28)   & 0.39 (0.28)   \\
& SIR & 4.47 (1.98)  & 0.52 (0.24) & 0.38 (0.27) & 4.13 (2.19)  & 0.46 (0.26) & 0.37 (0.27) & 4.13 (2.19)    & 0.46 (0.26)   & 0.37 (0.27)   \\
& SMAVE & 9.64 (0.65)  & 0.97 (0.07) & 0.95 (0.1)  & 9.7 (0.48)   & 0.98 (0.07) & 0.96 (0.08) & 9.7 (0.48)     & 0.98 (0.07)   & 0.96 (0.08)   \\
& smOPG & {\bf 4.56} (1.73)  & 0.62 (0.23) & 0.29 (0.21) & 4.24 (1.69)  & 0.59 (0.21) & 0.26 (0.22) & 4.24 (1.69)    & 0.59 (0.21)   & 0.26 (0.22)   \\ \hline
\multirow{6}{*}{III}   & CQR  & 10.00 (0.00) & 1.00 (0.00) & 1.00 (0.00) & 10.00 (0.00) & 1.00 (0.00) & 1.00 (0.00) & 10.00 (0.00)   & 1.00 (0.00)   & 1.00 (0.00)   \\
& OLS & 6.54 (2.65)  & 0.74 (0.28) & 0.56 (0.31) & 6.46 (2.70)  & 0.74 (0.28) & 0.55 (0.33) & 6.46 (2.70)    & 0.74 (0.28)   & 0.55 (0.33)   \\
& RANK & 7.63 (1.47)  & 0.98 (0.08) & 0.55 (0.27) & 7.67 (1.50)  & 0.98 (0.08) & 0.56 (0.28) & 7.67 (1.50)    & 0.98 (0.08)   & 0.56 (0.28)   \\
& SIR & 6.58 (2.26)  & 0.78 (0.26) & 0.54 (0.27) & 6.49 (2.27)  & 0.76 (0.26) & 0.53 (0.26) & 6.49 (2.27)    & 0.76 (0.26)   & 0.53 (0.26)   \\
& SMAVE & 9.45 (0.75)  & 1.00 (0.02) & 0.89 (0.15) & 9.47 (0.72)  & 1.00 (0.02) & 0.9 (0.14)  & 9.47 (0.72)    & 1.00 (0.02)   & 0.9 (0.14)    \\
& smOPG & {\bf 4.84} (1.94)  & 0.74 (0.23) & 0.23 (0.27) & {\bf 4.65} (2.07)  & 0.72 (0.24) & 0.21 (0.27) & {\bf 4.65} (2.07)  & 0.72 (0.24)   & 0.21 (0.27)   \\ \hline
\end{tabular}
}
\end{table}

Tables~\ref{tab:sims_low_dim}--\ref{tab:sims_high_dim} and Figures~\ref{fig:sims_low_dim}--\ref{fig:sims_high_dim}, summarize the simulation results under the different contamination mechanisms. The ROC plots in Figure~\ref{fig:sims_low_dim} illustrate the tradeoff between true positive and false positive rates in the low-dimensional setting. In the ultra high-dimensional setting, because $|p-s|$ is very large, the FPR appears low for all methods despite the high false selections in some methods, making the ROC plot less visually informative. Therefore, we report the ratio of the estimated TPR to the active set size $\hat s$, which is proportional to the precision, i.e, the proportion of selected predictors that are truly active.

\begin{figure}[htb!]
    \centering
    \includegraphics[width=.9\linewidth]{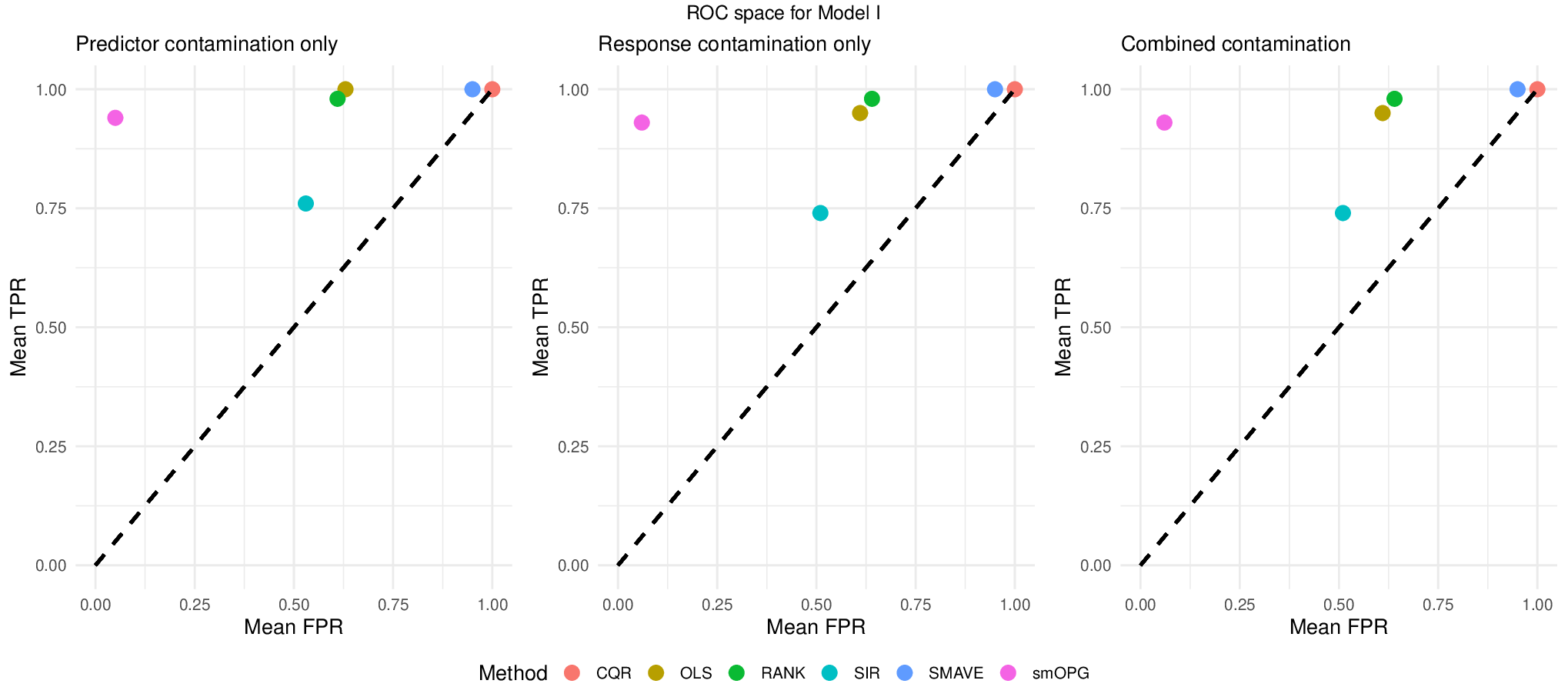}
    \includegraphics[width=.9\linewidth]{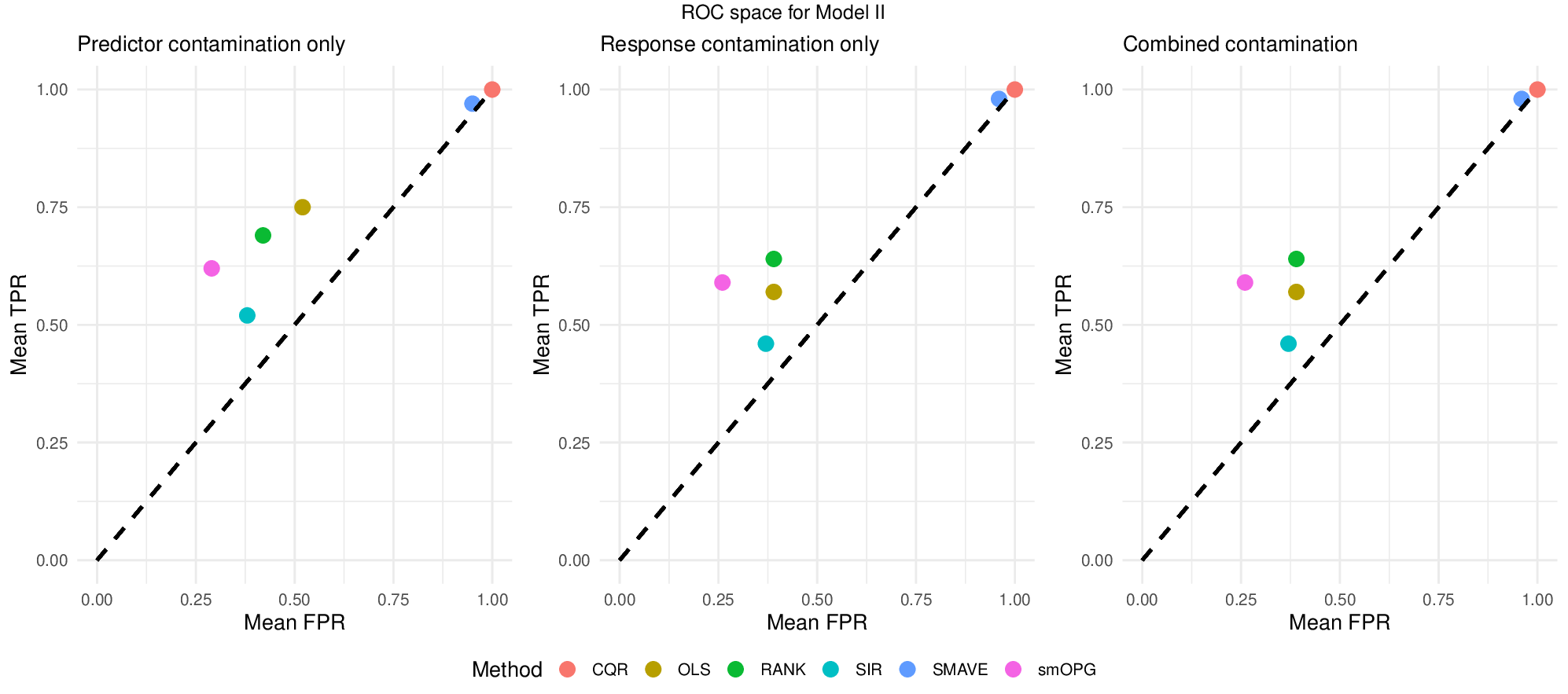}
    \includegraphics[width=.9\linewidth]{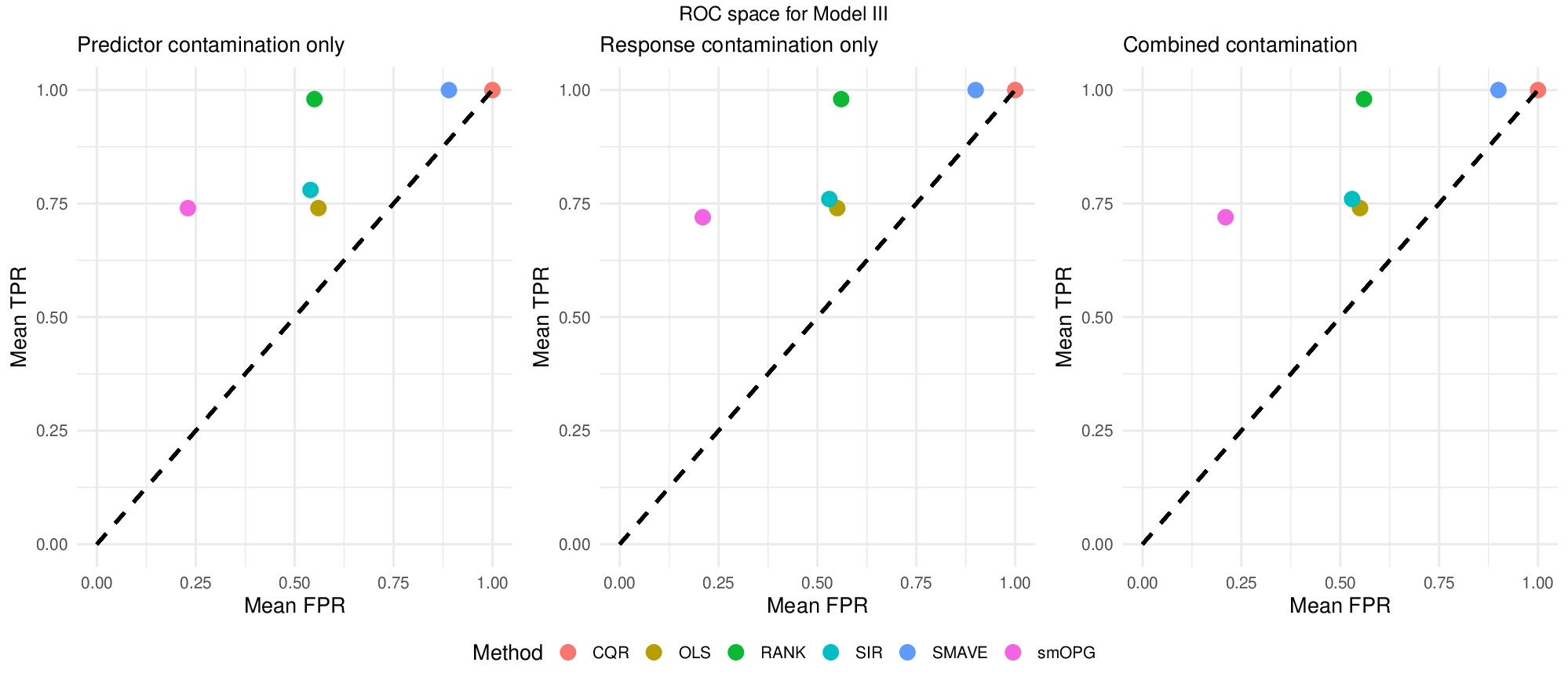}
    \vspace{-0.2in}
    \caption{Average TPR vs FPR based on on $(n,p)=(50, 10)$ with $5\%$ contamination based on 100 random samples }
    \label{fig:sims_low_dim}
\end{figure}

The simulation results reveals several patterns. In the low-dimensional setting, most methods perform reasonably well under Model I, where the regression structure is linear and the contamination is comparatively mild. As the settings become more challenging, the performance of all methods deteriorate. Comparatively, the proposed smOPG performed better than its competitors across settings with the differences more pronounced in Models I and III. In particular, under response only contamination and combined contamination, the proposed smOPG procedure tends to maintain a stronger balance between true positive recovery and false positive control compared to the competing techniques. This advantage is especially visible in Model II, where heteroscedasticity complicates direct estimation of the mean structure and response ordering. Model III remains monotone but introduces nonlinearity, and here too smOPG remains competitive while avoiding excessive model size. The contrast is sharper in the high-dimensional setting, where all methods deteriorate to some extent, but the proposed procedure remains comparatively stable under various contamination mechanisms. 

In the ROC plots in Figure~\ref{fig:sims_low_dim}, methods closer to the upper-left corner achieve stronger recovery of active predictors while limiting spurious selections. Thus, across settings, smOPG remains the most competitive. CQR and SMAVE lie very close to the dotted line, suggesting that their selection is not different from a random guess based on a fair coin flip. The proposed smOPG also shows the highest efficiency per additional variable included in the active set compared to the rest as displayed in Figure~\ref{fig:sims_high_dim}. Overall, the simulation results support the use of local median smoothing together with sparse rank-one aggregation for robust active subset recovery in contaminated single index models.


\begin{table}[t!]
\caption{Mean (standard deviation) of the active set size $\hat s$, true positive rate (TPR), and false positive rate (FPR) for $(n,p) = (100, 1000)$ based on 100 random samples. }
\label{tab:sims_high_dim}
\resizebox{\textwidth}{!}{
\begin{tabular}{cl|ccc|ccc|ccc}
\hline
\multirow{2}{*}{Model} & \multicolumn{1}{c|}{\multirow{2}{*}{Method}} & \multicolumn{3}{c|}{Predictor contamination only}        & \multicolumn{3}{c|}{Response contamination only}  & \multicolumn{3}{c}{Combined contaminationl} \\ \cline{3-11} 
& \multicolumn{1}{c|}{}  & $\hat s$  & TPR & FPR  & $\hat s$ & TPR  & FPR  & $\hat s$ & TPR  & FPR  \\ \hline
\multirow{4}{*}{I} & OLS & 14.50 (4.49)  & 1.00 (0.02) & 0.01 (0.00) & 20.00 (11.20) & 0.91 (0.25) & 0.02 (0.01) & 20.00 (11.20)   & 0.91 (0.25)   & 0.02 (0.01)  \\
& RANK  & 39.00 (18.6)  & 0.99 (0.04) & 0.03 (0.02) & 38.8 (17.10)  & 0.99 (0.05) & 0.03 (0.02) & 38.80 (17.10)   & 0.99 (0.05)   & 0.03 (0.02)  \\
& SIR   & 9.47 (4.07)   & 0.29 (0.13) & 0.01 (0.00) & 10.20 (3.97)  & 0.29 (0.13) & 0.01 (0.00) & 10.20 (3.97)    & 0.29 (0.13)   & 0.01 (0.00)  \\
& smOPG & {\bf 4.31} (0.94)   & 0.82 (0.17) & 0.00 (0.00) & {\bf 4.16} (1.20)   & 0.78 (0.21) & 0.00 (0.00) & {\bf 4.16} (1.20)     & 0.78 (0.21)   & 0.00 (0.00)  \\ \hline
\multirow{4}{*}{II}  & OLS  & 27.4 (15.70)  & 0.67 (0.22) & 0.02 (0.02) & 17.50 (17.60) & 0.40 (0.35) & 0.02 (0.02) & 17.50 (17.60)   & 0.40 (0.35)   & 0.02 (0.02)  \\
& RANK & 25.4 (16.30)  & 0.64 (0.2)  & 0.02 (0.02) & 29.30 (23.50) & 0.60 (0.20) & 0.03 (0.02) & 29.30 (23.50)   & 0.60 (0.20)   & 0.03 (0.02)  \\
& SIR   & 10.2 (3.50)   & 0.21 (0.13) & 0.01 (0.00) & 10.20 (3.73)  & 0.16 (0.15) & 0.01 (0.00) & 10.20 (3.73)    & 0.16 (0.15)   & 0.01 (0.00)  \\
& smOPG & {\bf 2.99} (2.89)   & 0.32 (0.17) & 0.00 (0.00) & {\bf 4.58} (4.72)   & 0.33 (0.19) & 0.00 (0.00) & {\bf 4.58} (4.72)     & 0.33 (0.19)   & 0.00 (0.00)  \\ \hline
\multirow{4}{*}{III}   & OLS  & 32.00 (17.90) & 0.65 (0.26) & 0.03 (0.02) & 31.40 (16.90) & 0.65 (0.26) & 0.03 (0.02) & 31.40 (16.90)   & 0.65 (0.26)   & 0.03 (0.02)  \\
& RANK & 39.2 (17.60)  & 1.00 (0.02) & 0.03 (0.02) & 40.00 (18.4)  & 1.00 (0.02) & 0.04 (0.02) & 40.00 (18.40)   & 1.00 (0.02)   & 0.04 (0.02)  \\
& SIR  & 10.30 (4.26)  & 0.32 (0.13) & 0.01 (0.00) & 10.30 (4.45)  & 0.33 (0.13) & 0.01 (0.00) & 10.30 (4.45)    & 0.33 (0.13)   & 0.01 (0.00)  \\
& smOPG & {\bf 4.56} (2.39)   & 0.54 (0.24) & 0.00 (0.00) & {\bf 4.41} (1.93)   & 0.55 (0.24) & 0.00 (0.00) & {\bf 4.41} (1.93)     & 0.55 (0.24)   & 0.00 (0.00)  \\ \hline
\end{tabular}
}
\end{table}

\begin{figure}[htb!]
    \centering
    \includegraphics[width=1.0\linewidth]{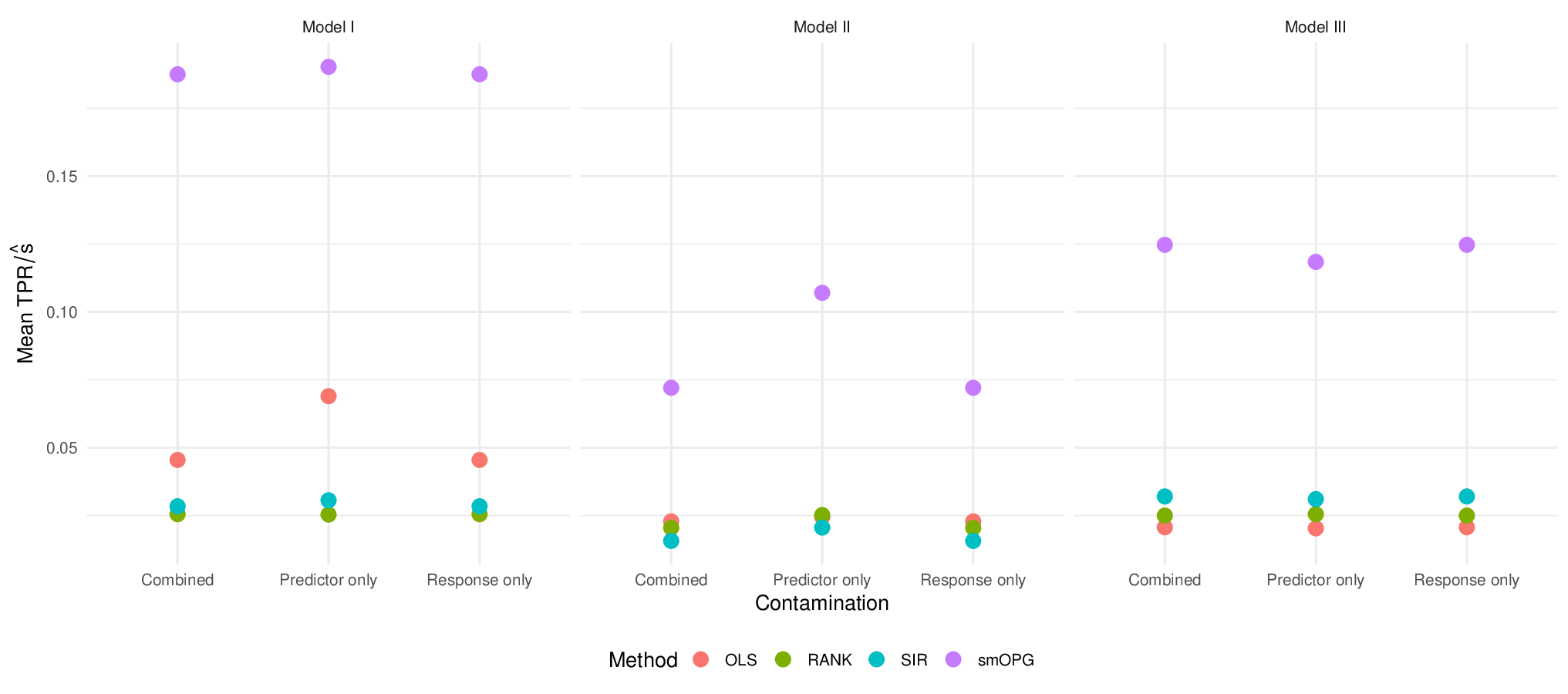}
    \caption{The ratio of the average TPR to the active set size $\hat s$ based on $(n,p)=(100, 1000)$ with $5\%$ contamination based on 100 random samples.}
    \label{fig:sims_high_dim}
\end{figure}

\subsection{Effect of tuning parameters}

The proposed smOPG procedure involves tuning parameters at two stages of estimation: the local \(\ell_1\)-penalized median regressions and the subsequent sparse rank-one approximation of the outer-product of gradients. We therefore examine the sensitivity of the method to the local regularization parameter \(\lambda\) and to the contribution threshold \(\varrho\) used for selecting the final support size.

We first study the effect of \(\lambda\) on active-set recovery while holding the contribution threshold fixed at \(\varrho=0.5\%\). In our implementation using the \texttt{conquer} package, the local regularization parameter is restricted to the interval \([0,1]\). Preliminary experiments indicated that values of \(\lambda>0.2\) typically shrink nearly all local slope estimates to zero, which in turn yields an empty or nearly empty selected support after aggregation. We therefore restrict attention to \(\lambda\le 0.2\). All other simulation settings are the same as in Section \ref{subsec:subset_recovery}, except that we use \((n,p)=(100,50)\) in order to examine behavior in a moderate-dimensional regime.

\begin{table}[htb!]
\caption{Mean (standard deviation) of the active set size $\hat s$, true positive rate (TPR), and false positive rate (FPR) for $(n,p) = (100, 1000)$ based on 100 random samples. }
\label{tab:sims_lambda}
\resizebox{\textwidth}{!}{
\begin{tabular}{l|lll|lll|lll}
\hline
& \multicolumn{3}{c|}{Model I}  & \multicolumn{3}{c|}{Model II}  & \multicolumn{3}{c}{Model III}  \\ \hline
$\lambda$ & \multicolumn{1}{c}{$\hat s$} & \multicolumn{1}{c}{TPR} & \multicolumn{1}{c|}{FPR} & \multicolumn{1}{c}{$\hat s$} & \multicolumn{1}{c}{TPR} & \multicolumn{1}{c|}{FPR} & \multicolumn{1}{c}{$\hat s$} & \multicolumn{1}{c}{TPR} & \multicolumn{1}{c}{FPR} \\ \hline
0.001  & {\bf 5.29} (1.67) & 0.88 (0.22) & 0.02 (0.03)  & 6.90 (3.38) & {\bf 0.66} (0.28)  & 0.08 (0.06)  & 8.43 (4.50) & { 0.60} (0.21) & 0.12 (0.09)  \\
0.01   & {\bf 4.82} (1.01) & {\bf 0.91} (0.20) & 0.01 (0.02) & {\bf 4.80} (2.44) & 0.56 (0.23) & 0.04 (0.04)  & 7.69 (3.90)  & 0.57 (0.22) & 0.11 (0.08) \\
0.10   & 3.78 (1.30) & 0.70 (0.23)  & 0.01 (0.01) & 2.10 (0.99) & 0.28 (0.10) & 0.02 (0.01) & {\bf 3.92} (1.47) & {\bf 0.63} (0.22) & 0.02 (0.02)  \\
0.15   & 2.03 (1.04) & 0.36 (0.18) & 0.01 (0.01) & 1.40 (0.70) & 0.24 (0.08) & 0.00 (0.01) & 2.46 (1.06)  & 0.39 (0.16)  & 0.01 (0.02) \\
0.20   & 1.34 (0.61) & 0.25 (0.10)  & 0.00 (0.01)  & 1.00 (0.00) & 0.16 (0.08) & 0.00 (0.01)  & 1.53 (0.63) & 0.27 (0.10)  & 0.00 (0.01) \\ \hline
\end{tabular}
}
\end{table}

Table~\ref{tab:sims_lambda} suggests that the proposed procedure performs best for relatively small values of \(\lambda\), particularly in the range \([0.001,0.01]\). This indicates that overly aggressive local penalization may remove useful signal from the local median gradients before the global aggregation step is applied. By contrast, smaller values of \(\lambda\) preserve more local directional information, which appears to improve subsequent support recovery while still benefiting from the robustness of the median-based local loss.

We next investigate the effect of the contribution threshold \(\varrho\), while keeping all other settings unchanged. Recall that \(\varrho\) determines the minimum relative increase in the leading sparse singular component required to admit an additional active variable. The results in Table~\ref{tab:sims_delta} show that smaller thresholds, especially those at or below \(1\%\), tend to produce larger selected models and higher true positive rates. Among the values considered, \(\varrho=0.5\%\) appears to provide the best overall balance between recovering active variables and controlling false positives, with an average selected model size close to the true support size.

\begin{table}[htb!]
\caption{Mean (standard deviation) of the active set size $\hat s$, true positive rate (TPR), and false positive rate (FPR) for $(n,p) = (100, 1000)$ based on 100 random samples. }
\label{tab:sims_delta}
\resizebox{\textwidth}{!}{
\begin{tabular}{c|lll|lll|lll}
\hline
\multicolumn{1}{l|}{} & \multicolumn{3}{c|}{Model I} & \multicolumn{3}{c|}{Model II} & \multicolumn{3}{c}{Model III} \\ \hline
\multicolumn{1}{l|}{$\varrho$} & \multicolumn{1}{c}{$\hat s$} & \multicolumn{1}{c}{TPR} & \multicolumn{1}{c|}{FPR} & \multicolumn{1}{c}{$\hat s$} & \multicolumn{1}{c}{TPR} & \multicolumn{1}{c|}{FPR} & \multicolumn{1}{c}{$\hat s$} & \multicolumn{1}{c}{TPR} & \multicolumn{1}{c}{FPR} \\ \hline
0.1\%        & 6.64 (3.08)              & {\bf 0.98} (0.13)             & 0.04 (0.07)              & 11.30 (3.91)             & {\bf 0.72} (0.19)             & 0.17 (0.08)              & 16.6 (8.65)              & {\bf 0.70} (0.21)             & 0.29 (0.18)             \\
0.5\%  & {\bf 4.88} (1.05)              & 0.91 (0.21)             & 0.01 (0.02)              & 6.46 (3.02)              & 0.6 (0.20)              & 0.08 (0.06)              & 7.75 (3.94)              & 0.57 (0.22)             & 0.11 (0.08)             \\
1\%  & {\bf 4.55} (1.02)              & 0.88 (0.22)             & 0.00 (0.02)              & {\bf 4.59} (2.13)              & 0.52 (0.20)             & 0.04 (0.04)              & {\bf 5.30} (2.90)                & 0.51 (0.22)             & 0.06 (0.05)             \\
5\%                        & 2.77 (1.14)              & 0.55 (0.24)             & 0.00 (0.01)              & 2.32 (1.00)              & 0.39 (0.16)             & 0.01 (0.01)              & 2.63 (1.24)              & 0.36 (0.19)             & 0.02 (0.02)             \\
10\%                       & 1.78 (0.63)              & 0.35 (0.13)             & 0.00 (0.01)              & 1.93 (0.76)              & 0.35 (0.15)             & 0.00 (0.01)              & 2.04 (0.92)              & 0.31 (0.16)             & 0.01 (0.02)             \\ \hline
\end{tabular}
}
\end{table}

Overall, these experiments suggest that the proposed procedure is relatively stable when the local regularization parameter is chosen to be small and the aggregation threshold is set at a moderate level. In the simulation studies reported below, we therefore use values of \(\lambda\) in the range \([0.001,0.01]\) together with \(\varrho=0.5\%\), unless otherwise noted. Additional simulation studies involving symmetric-link and low signal-to-noise settings are deferred to the supplemental appendix.

\section{Real Application}\label{sec:realdata}
In this section, we consider two real data applications. For each dataset, we apply the same variable selection procedures considered in the simulation study, where applicable. To compare the quality of the selected active set on a common basis, we refit an ordinary least squares (OLS) model using only the variables selected by each method and evaluate prediction from this reduced model. Each dataset is repeatedly partitioned into training and test sets so that both selection stability and out-of-sample predictive performance can be assessed.

For each method, we report the proportion of times each variable is selected across repeated data splits as a measure of selection stability. We also assess the selected subset \(\X_{\mathcal A}\) using the distance correlation (dCor) measure of \citet{dcorSzekely2007}, which measures the strength of joint dependence between \(Y\) and \(\X_{\mathcal A}\). Unlike Pearson correlation, distance correlation detects both linear and nonlinear dependence between random vectors of arbitrary dimension, and it equals zero if and only if the two vectors are independent. To complement it, we report the root mean squared prediction error (RMSE) obtained from the post-selection OLS refit on the test data. This provides a common prediction benchmark across methods. 

\subsection{Application to Pollution Data}\label{subsec:pollution_design}
We analyze the US air pollution dataset collected by \cite{mcdonald1960air} and previously studied by \citet{gencc2025weighted} and \citet{yuzbacsi2020shrinkage}.
The data comprise $n=60$ US cities in 1960 with response $y$ as the total age--adjusted mortality rate (per $100{,}000$) and $p=15$ predictors
summarizing meteorological, demographic, socioeconomic, and pollution-related factors. Standard collinearity diagnostics indicate substantial linear dependence among predictors with variance inflation factors for the hydrocarbon (HC) and nitric oxide (NOX) approximately $98.6$ and $105.0$, respectively.

\begin{figure}[htb!]
    \centering
    \includegraphics[width=0.99\linewidth]{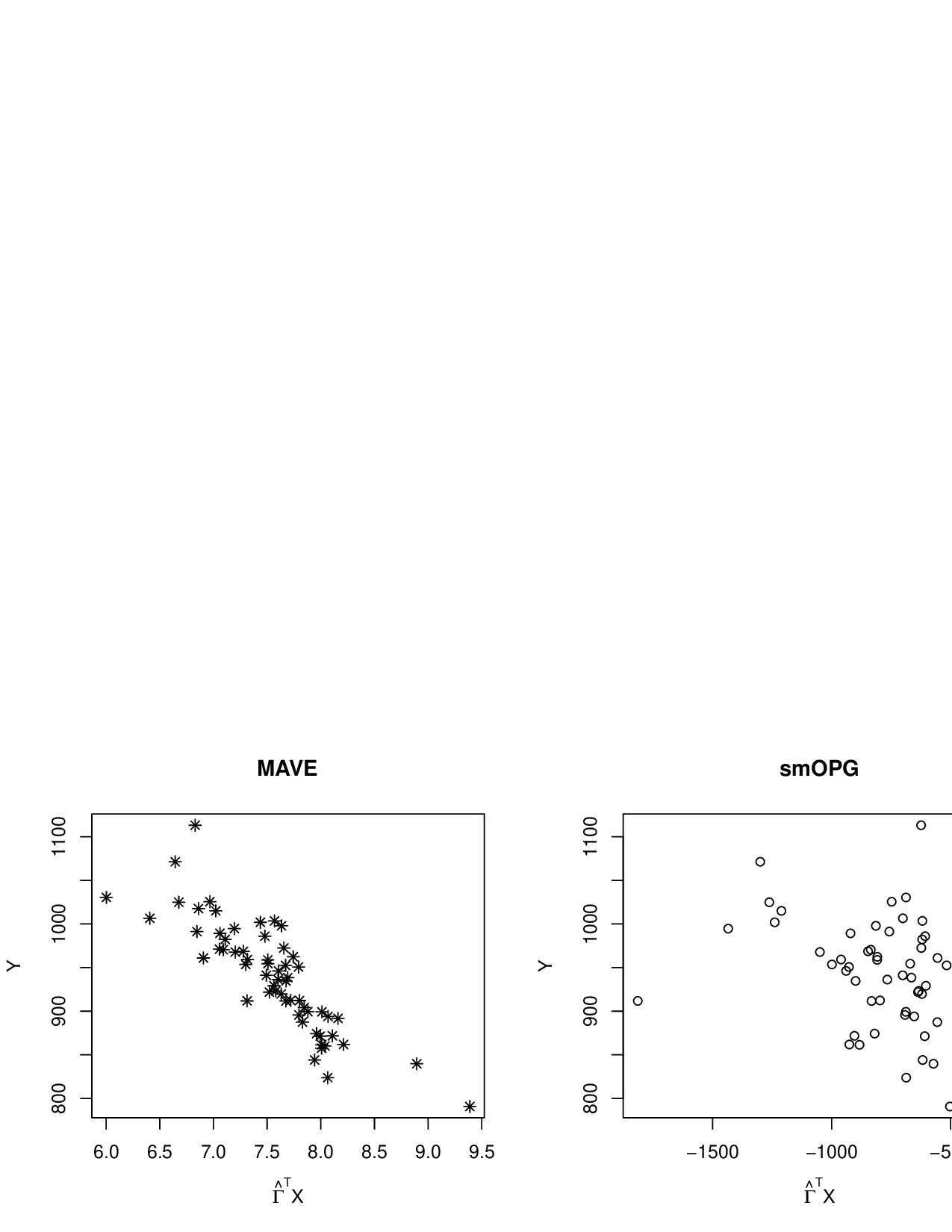}
    \caption{Estimated sufficient summary plot based on classical MAVE and smOPG}
    \label{fig:essp_pollution}
\end{figure}

We begin by examining the estimated sufficient summary plot of \(Y\) versus \(\hat\Gamma^\top\X\) based on the full sample, using both classical minimum average variance estimation (MAVE; \cite{xia2002adaptive}) and the proposed smOPG, in order to visualize the underlying regression structure. As shown in Figure~\ref{fig:essp_pollution}, the relationship appears to be approximately linear, although some observations appear to depart noticeably from the main trend. These extreme observations appear less separated from the bulk of the data for the MAVE summary than under smOPG. This is consistent with the fact that MAVE is mean-based and therefore more influenced by large responses, whereas the median-based smOPG is less sensitive to extreme observations.

The estimated basis vectors also suggest markedly different structures. The estimated MAVE direction is
\[
(-0.02, 0.02, 0.02, 0.08, 0.99, 0.02, 0.02, 0.00, -0.04, 0.01, -0.01, 0.01, -0.01, 0.00, 0.01)^\top,
\]
whereas for smOPG it is
\[
(-0.29, 0.00, 0.00, 0.00, 0.00, 0.00, 0.00, -0.18, -0.84, 0.00, 0.00, 0.00, 0.00, -0.42, 0.00)^\top.
\]
These estimates suggest that only a small subset of predictors contributes strongly to the estimated index. We therefore proceed to a more systematic variable selection analysis.

To assess selection stability and out-of-sample performance, the data are repeatedly partitioned into a training set of size 40 and a test set of size 20. All tuning and model fitting are carried out exclusively on the training set, and the predictors are centered and scaled prior to estimation. For each method, Table~\ref{tbl:pollution} reports the average selected model size, the post-selection distance correlation between the response and the selected predictor subset, and the prediction error from the post-selection OLS refit. Unless otherwise noted, the implementation and tuning settings follow those used in the simulation study in Section~\ref{subsec:subset_recovery}.

\begin{table}[htb!]
\centering
\caption{Train selection proportion in 100 random samples -- proportions $\geq  50\%$ are emboldened. dCor and RMSE are based on test sample.}
\label{tbl:pollution}
\resizebox{0.9\textwidth}{!}{
\begin{tabular}{lcccccc}
\hline
Predictor                         & CQR           & OLS           & SIR           & SMAVE         & RANK           & smOPG         \\ \hline
Avg annual precipitation          & \textbf{1.00} & \textbf{0.97} & 0.42          & \textbf{0.99} & \textbf{0.95}  & \textbf{0.80} \\
Avg Jan temperature               & \textbf{1.00} & \textbf{0.74} & 0.30          & \textbf{0.98} & \textbf{0.81}  & 0.32          \\
Avg Jul temperature               & \textbf{1.00} & \textbf{0.52} & 0.17          & \textbf{0.99} & \textbf{0.53}  & 0.10          \\
\% Population 65+                 & \textbf{1.00} & 0.11          & 0.11          & \textbf{0.93} & 0.16           & 0.11          \\
Avg household size                & \textbf{1.00} & 0.23          & 0.31          & \textbf{0.99} & 0.29           & 0.07          \\
Median school years               & \textbf{1.00} & \textbf{0.86} & 0.45          & \textbf{0.95} & \textbf{0.74}  & 0.41          \\
\% Housing units                  & \textbf{1.00} & \textbf{0.55} & 0.39          & \textbf{0.95} & \textbf{0.63}  & 0.17          \\
Urban population per sq. mile     & \textbf{1.00} & \textbf{0.82} & \textbf{0.99} & \textbf{0.97} & \textbf{0.77}  & 0.47          \\
\% non-White population           & \textbf{1.00} & \textbf{1.00} & \textbf{0.59} & \textbf{1.00} & \textbf{1.00}  & \textbf{0.99} \\
\% White collar occupations       & \textbf{1.00} & \textbf{0.52} & 0.28          & \textbf{0.94} & \textbf{0.52}  & 0.13          \\
\% family income \textless \$3000 & \textbf{1.00} & 0.19          & 0.16          & \textbf{0.95} & 0.18           & 0.03          \\
Rel. HC pollution potential       & \textbf{1.00} & 0.13          & 0.11          & \textbf{0.97} & 0.32           & 0.39          \\
Rel. NOX pollution potential      & \textbf{1.00} & 0.11          & 0.19          & \textbf{0.98} & 0.05           & 0.42          \\
Rel. SO\textsubscript{2} pollution potential      & \textbf{1.00} & \textbf{0.98} & \textbf{0.51} & \textbf{0.91} & \textbf{0.99}  & \textbf{0.80} \\
Annual avg \% rel. humidity       & \textbf{1.00} & 0.34          & 0.23          & \textbf{0.97} & 0.25           & 0.03          \\ \hline
$\hat s$   & 14.47         & 8.07          & 5.21          & 15.00         & 8.19           & 5.24          \\
$dCor(Y, \X_{\hat{\mathbf A}})$                              & 0.43          & 0.44          & 0.42          & 0.43          & 0.44           & \textbf{0.47} \\
RMSE                              & 53.84         & 45.94         & 52.91         & 53.37         & \textbf{44.22} & 46.57         \\ \hline
\end{tabular}
}
\end{table}

Table~\ref{tbl:pollution} summarizes the selection proportions, together with the average test RMSE and the average distance correlation between \(Y\) and \(\X_{\widehat{\mathcal A}}\). As in the simulation study, CQR and SMAVE tend to select nearly all predictors. Penalized OLS and Rank yield very similar results, selecting roughly 66\% of the variables on average. By contrast, smOPG and sparse SIR are substantially more parsimonious, each producing an average model size of about 3 predictors. Although sparse SIR is similarly sparse, the subset selected by smOPG includes predictors representing weather, socio-demographic, and pollution-related domains. In this sense, the smOPG model captures a broader range of relevant predictor types. It also exhibits stronger joint dependence with the response, as reflected by the distance correlation measure, while remaining competitive in terms of prediction error.

Across methods, average annual rainfall, socio-demographic variables related to urban structure (such as the percentage of non-White population), and pollution variables associated with SO\textsubscript{2} are selected with high frequency. Overall, this application suggests that smOPG provides a competitive and parsimonious alternative to both classical penalization methods and existing sparse dimension reduction procedures, with stable selection patterns, strong dependence with the response, and favorable out-of-sample prediction performance.

\subsection{Application to Riboflavin Data}\label{subsec:riboflavin}
We next consider the \emph{riboflavin} dataset, a high-dimensional genomic study on \textit{Bacillus subtilis} that has become a standard benchmark for sparse regression in the $p \gg n$ regime. The data was introduced in \citet{buhlmann2014high} and are distributed in \texttt{R} via the \texttt{hdi} package. The goal is to identify genes whose expression levels are associated with increased riboflavin
(\textit{vitamin B2}) production, with the longer-term objective of engineering higher-yield strains. The sample size $n=71$ and the number of predictors $p=4088$, where each predictor is the
(log-)expression level of a gene. The response is the log riboflavin production rate.

\begin{figure}[ht]
    \centering
    \includegraphics[width=0.99\linewidth]{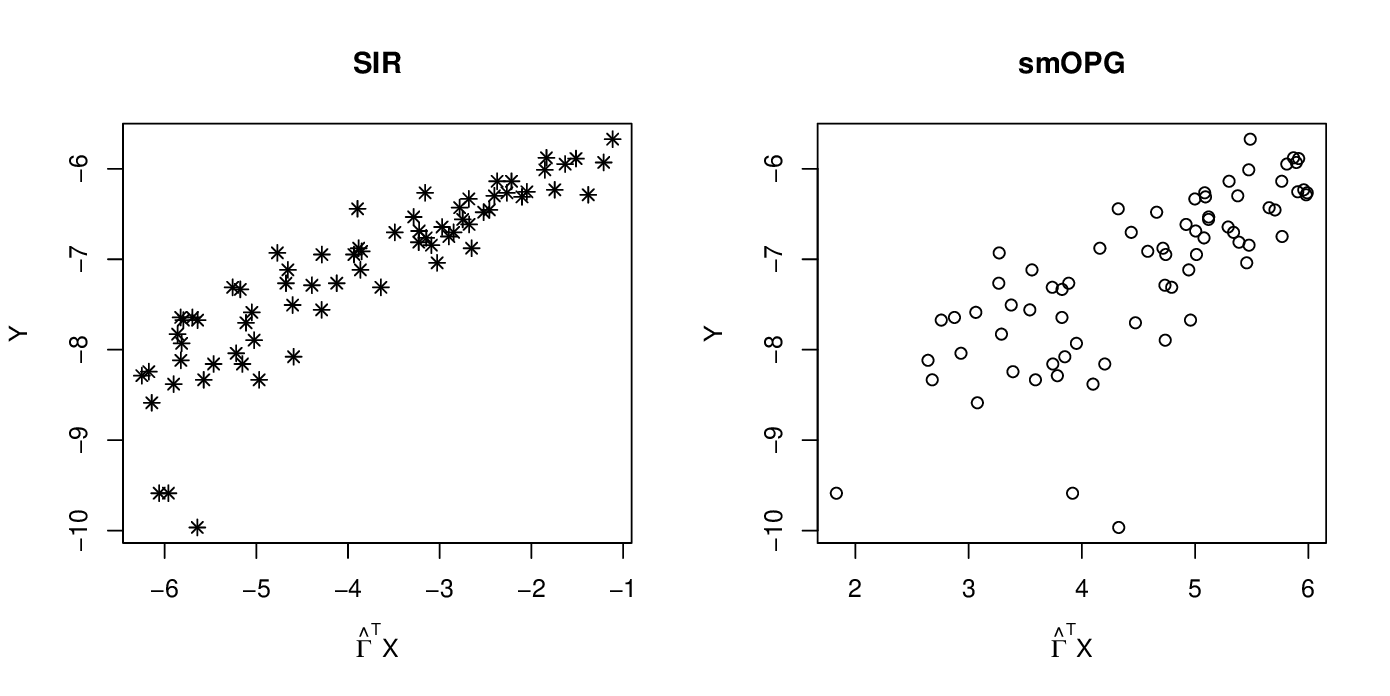}
    \vspace{-0.2in}
    \caption{Estimated sufficient summary plot based on sparse SIR and smOPG}
    \label{fig:essp_ribo}
\end{figure}

We begin by examining the estimated sufficient summary plots based on sparse SIR and smOPG, shown in Figure~\ref{ribo}. Both plots suggest a roughly linear relationship between the response and the fitted index, although three observations appear to depart noticeably from the main trend. The extent to which these observations are separated from the bulk of the data differs somewhat between the two methods, reflecting differences in the estimated directions.

We next compare the competing methods under repeated random training/test partitions, excluding CQR and SMAVE because of computational or implementation limitations in this setting. For each replicate, we randomly select 50 observations for training and use the remaining 21 observations for testing. The predictors are standardized prior to model fitting, and, unless otherwise noted, the tuning settings are the same as those used in the simulation study in Section~\ref{subsec:subset_recovery}.

\begin{table}[htb!]
\small
\caption{Top 10 selected genes (100 splits) with average test RMSE and distance correlation. Genes selected by at all 4 methods are highlighted in magenta, 3 methods are in red, while those highlighted in blue are selected by two methods. Genes in black color were selected by only one method.}
\label{ribo}
\resizebox{\textwidth}{!}{
\begin{tabular}{ccccccccc}
\hline
& \multicolumn{2}{c}{OLS} & \multicolumn{2}{c}{SIR}  & \multicolumn{2}{c}{RANK}  & \multicolumn{2}{c}{smOPG} \\ \hline
No. & Gene & Freq & Gene & Freq & Gene & Freq & Gene & Freq \\ \hline
1  & {\color[HTML]{FF00FF} 2564} & 0.89 & {\color[HTML]{0000FF} 1588} &  0.6  &  3311 &  0.85 &  {\color[HTML]{FF0000} 4003} &  0.81 \\
2 & {\color[HTML]{0000FF} 1762} &  0.82 &  {\color[HTML]{FF00FF} 4004} &  0.55 &  {\color[HTML]{FF0000} 1123} &  0.84 &  {\color[HTML]{0000FF} 1516} &  0.78 \\
3 & {\color[HTML]{0000FF} 624}  &  0.75 & 827 &  0.47 &  {\color[HTML]{0000FF} 1762} & 0.7 & {\color[HTML]{FF00FF} 4004} &  0.51 \\
4 & {\color[HTML]{FF0000} 4003} & 0.73 & {\color[HTML]{FF0000} 1297} & 0.46 & {\color[HTML]{FF00FF} 4004} & 0.7  & {\color[HTML]{FF00FF} 2564} & 0.45 \\
5 & {\color[HTML]{0000FF} 1827} & 0.71 & {\color[HTML]{0000FF} 2184} & 0.45 & 1503  & 0.66 & {\color[HTML]{FF0000} 1123} & 0.44 \\
6 & 73 & 0.68 & {\color[HTML]{FF0000} 1123} & 0.44 & {\color[HTML]{0000FF} 2184} & 0.63 & {\color[HTML]{FF0000} 1297} & 0.41 \\
7 &  2027  & 0.62 & {\color[HTML]{0000FF} 1516} & 0.39 & {\color[HTML]{FF00FF} 2564} & 0.63 & 1312 & 0.36 \\
8 & {\color[HTML]{FF00FF} 4004} & 0.61 & 1585 & 0.35 & {\color[HTML]{FF0000} 1297} & 0.62 & {\color[HTML]{0000FF} 624}  & 0.32 \\
9 & {\color[HTML]{0000FF} 1639} & 0.59 & 1478 & 0.29 & {\color[HTML]{FF0000} 4003} & 0.61 & {\color[HTML]{0000FF} 1588} & 0.25 \\
10 & 1131 & 0.56 & {\color[HTML]{FF00FF} 2564} &  0.29 & {\color[HTML]{0000FF} 1827} &  0.6  & {\color[HTML]{0000FF} 1639} &  0.25 \\ \hline
$\hat s$ &  & 30.68 & & 17.96  & & 34.84 &  & 7.42 \\
$dCor(Y, \X_{\hat{\mathcal A}})$ &  & 0.73 & & 0.70 & & \textbf{0.74}  &  & \textbf{0.74}  \\
RMSE  &  & \textbf{0.54} & & 0.71 &  & 1.16  &  & 0.59 \\ \hline
\end{tabular}
}
\end{table}

Table~\ref{ribo} reports the ten most frequently selected genes for each method, together with the average test RMSE and the average distance correlation. The four procedures produce partially overlapping, but not identical sets of frequently selected genes. Some genes appear repeatedly across multiple methods, suggesting signals that are robust to the choice of regularization strategy. Although Rank, OLS, and smOPG yield similar post-selection distance correlations, smOPG produces the most parsimonious selected gene set while remaining competitive in terms of prediction error. Moreover, nearly all genes frequently selected by smOPG are also selected by at least one other method, suggesting that the smOPG subset reflects a relatively stable selection pattern.

Overall, the results suggest that smOPG achieves a favorable balance between parsimony, predictive performance, and retention of dependence with the response. This feature is particularly appealing in genomics applications, where the primary objective is often to identify a smaller set of candidate genes for downstream experimental validation rather than to optimize prediction alone.

\section{Conclusion}\label{sec:conclusion}

This paper studied active subset recovery in sparse single index models under various contamination mechanisms. To address the limitations of mean- and rank-based procedures in such settings, we proposed smOPG, a sparse median outer-product-of-gradients method that combines local \(\ell_1\)-penalized median regression with sparse rank-one aggregation. The proposed procedure targets the conditional median, avoids reliance on global response ordering, and exploits sparsity at both the local and global stages.

At the population level, we showed that the local median gradient lies in the central subspace, which provides the basis for recovering a sparse sufficient direction through outer-product aggregation. At the sample level, we established a convergence rate for the local penalized median slope estimator and gave a high-level support recovery result for the aggregated sparse direction. The numerical studies indicate that the proposed method remains competitive in clean settings and is particularly effective under contamination, where it tends to provide a favorable balance between parsimonious selection, dependence retention, and predictive performance. The real data analyses further illustrate its practical value for stable variable selection in applications where robustness and interpretability are both important.

The current development focuses on the single index setting. Extensions to multi-index models, grouped or structured sparsity, and more adaptive tuning strategies would be natural directions for future research.


\bibliographystyle{plain}  
\bibliography{references}  

\appendix
\section*{Appendix of Proofs}
\begin{proof}[Proof of Theorem \ref{thm:sparse_median}]\quad\\
Under Assumption \ref{ass_mederror}, the error satisfies
$Q_{1/2}(\epsilon\mid \X)=0$ a.s. Therefore, for each \(\x\in \mathrm{Supp}(\X)\),
\[
m(\x) = Q_{1/2}(Y\mid \X=\x) = Q_{1/2}\{g(\bbeta^\top \x)+\epsilon \mid \X=\x\} = g(\bbeta^\top \x).
\]
By Assumption \ref{ass_diff}, the conditional median function \(m(\x)\) is continuously differentiable on \(\mathrm{Supp}(\X)\). Hence, by the chain rule, $\nabla m(\x) = g'(\bbeta^\top \x)\bbeta$, which implies that \(\nabla m(\x)\in \mathrm{span}(\bbeta)=\spc\) for every \(\x\in \mathrm{Supp}(\X)\). In particular, $\nabla m(\x_0)=g'(\bbeta^\top \x_0)\bbeta$.

Now, let \((a_0,\bb_0)\) denote a minimizer of the population local median criterion
\[
\mathcal L_h(a,\bb;\x_0) = \E\left[
K_h(\X-\x_0)\,
\bigl|Y-a-\bb^\top(\X-\x_0)\bigr|
\right].
\]

By standard local linear M-estimation theory for conditional quantile regression, under Assumptions \ref{ass_mederror}--\ref{ass_diff} and the regularity conditions on the kernel, the slope component of the population local median minimizer satisfies 
\[ 
\bb_0=\nabla m(\x_0)+o(1) \text{ as } h\to 0. 
\]
Since \(\nabla m(\x_0)\in \spc\), we obtain
\[
\mathrm{dist}\{\bb_0,\spc\} = \inf_{\bv\in \spc}\|\bb_0-\bv\|_2 \to 0.
\]
If, in addition, \(g'(\bbeta^\top \x_0)\neq 0\), then \(\nabla m(\x_0)\neq 0\) and $\mathrm{span}\{\nabla m(\x_0)\} = \mathrm{span}(\bbeta)
= \spc$. Hence the angle between \(\mathrm{span}(\bb_0)\) and \(\spc\) converges to zero as \(h\to 0\).

Finally, suppose that \(\bbeta\) is \(s\)-sparse with active set $\mathcal A=\{j:\beta_j\neq 0\}$. Since $\nabla m(\x_0)=g'(\bbeta^\top \x_0)\bbeta$,
we have
\[
\mathrm{supp}\{\nabla m(\x_0)\}\subseteq \mathcal A,
\]
with equality whenever \(g'(\bbeta^\top \x_0)\neq 0\). Therefore the limiting median gradient is supported on the active coordinates and lies in the one-dimensional sparse central direction. If, in addition, the penalized population minimizer \(\bb_{0,\lambda}\) satisfies
\[
\mathrm{supp}(\bb_{0,\lambda})=\mathcal A
\]
for all sufficiently small \(\lambda=\lambda_h\to 0\), then \(\mathrm{span}(\bb_{0,\lambda})\) coincides asymptotically with the sparse central direction \(\spc^{(s)}\).
\end{proof}


\begin{proof}[Proof of Theorem \ref{thm:beta_consistency}]\qquad \\
Fix an interior point \(\x_0\in\mathrm{Supp}(\X)\), and, for notational simplicity, write
\[
\bb_0=\bb_0(\x_0),\qquad
\bb_h=\bb_h(\x_0),\qquad
\hat{\bb}=\hat{\bb}(\x_0),\qquad
\Z_i=\X_i-\x_0.
\]
Let \((a_h,\bb_h)\) denote the minimizer of the population local median criterion at bandwidth \(h\).

Under Assumptions \ref{ass_mederror} and \ref{ass_diff}, the conditional median function \(m(\cdot)\) is continuously differentiable, and by Condition \textup{(C2)} it has bounded derivatives up to order three on \(\mathrm{Supp}(\X)\). Together with the symmetry of the kernel in Condition \textup{(C3)}, standard local linear M-estimation arguments in \cite{fan2016multivariate} imply that the population local slope has bias of order \(h^2\), namely
\begin{align}
\|\bb_h-\bb_0\|_2=O(h^2).
\label{eq:bias_local}
\end{align}

It remains to bound \(\|\hat{\bb}-\bb_h\|_2\). Define
\[
\psi_{1/2}(u)=\frac12-\mathbb I(u<0).
\]
The empirical score for the slope component, evaluated at the population target, is
\begin{align}
S_n
=
\frac{1}{n}\sum_{i=1}^n K_h(\Z_i)\Z_i\psi_{1/2}(\epsilon_i),
\label{eq:score}
\end{align}
where \(K_h(\Z_i)=h^{-p}K(\Z_i/h)\). Since \(Q_{1/2}(\epsilon_i\mid \X_i)=0\), we have $\E\{\psi_{1/2}(\epsilon_i)\mid \X_i\}=0$,
and therefore \(\E(S_n)=0\).

To determine the stochastic order of \(S_n\), observe that
\[
\E\!\left[K_h(\Z)^2\Z\Z^\top \psi_{1/2}(\epsilon)^2\right]
=
h^{2-p}
\int K(\bu)^2\bu\bu^\top
\E\!\left\{\psi_{1/2}(\epsilon)^2\mid \X=\x_0+h\bu\right\}
f_{\X}(\x_0+h\bu)\,d\bu,
\]
where \(\bu=(\X-\x_0)/h\). Because \(|\psi_{1/2}(\epsilon)|\le 1/2\), and Conditions \textup{(C1)}--\textup{(C3)} ensure boundedness of the kernel and the design density, each coordinate of \(S_n\) has variance of order \(h^{2-p}/n\). A Bernstein inequality together with a union bound over the \(p\) coordinates then yields
\begin{align}
\|S_n\|_\infty
=
O_p\left(
\sqrt{\frac{h^{2-p}\log p}{n}}
\right).
\label{eq:score_rate}
\end{align}

Next consider the local Hessian of the population median criterion. Although the absolute-deviation loss is not twice differentiable, the population criterion admits a local quadratic expansion by the standard quantile-regression identity. The corresponding slope Hessian matrix is
\begin{align}
H_h(\x_0)
=
\E\left[
K_h(\Z)\Z\Z^\top
f_{Y\mid \X}\bigl(m(\x_0)\mid \X\bigr)
\right].
\label{eq:curvature}
\end{align}
By Assumption \ref{ass_diff}, the conditional density at the median is uniformly bounded away from zero and infinity, so Condition \textup{(C6)} transfers to \(H_h(\x_0)\) up to multiplicative constants. Moreover, a change of variables shows that
\begin{align}
H_h(\x_0)=h^2H_0(\x_0)+o(h^2),
\label{eq:hessian_rate}
\end{align}
where \(H_0(\x_0)\) is positive definite on the relevant sparse cone. Hence the effective local Hessian is of order \(h^2\).

Choose the tuning parameter so that it dominates the empirical slope score, i.e.,
\begin{align}
\lambda_n
\asymp
\sqrt{\frac{h^{2-p}\log p}{n}}.
\label{eq:lambda_rate}
\end{align}
By the basic inequality for \(\ell_1\)-penalized quantile regression, the error vector \(\hat{\bb}-\bb_h\) belongs to the usual sparse cone. Combining this cone condition with the local restricted eigenvalue condition in \textup{(C6)} and the curvature bound in \eqref{eq:hessian_rate} gives
\begin{align}
\|\hat{\bb}-\bb_h\|_2
=
O_p\left(\frac{\sqrt{s}\lambda_n}{h^2}\right)
=
O_p\left\{
\sqrt{\frac{s\log p}{nh^{p+2}}}
\right\}.
\label{eq:stochastic_rate}
\end{align}
Condition \textup{(C4)} guarantees that this stochastic term converges to zero.

Finally, combining \eqref{eq:bias_local} and \eqref{eq:stochastic_rate} yields
\[
\|\hat{\bb}-\bb_0\|_2
\le
\|\hat{\bb}-\bb_h\|_2+\|\bb_h-\bb_0\|_2
=
O_p\left\{
h^2+\sqrt{\frac{s\log p}{nh^{p+2}}}
\right\},
\]
which establishes \eqref{eq:local_rate}.
\end{proof}

\begin{proof}[Proof of Proposition \ref{prop:aggregate_selection}]\quad\\
From $\widetilde{\B}_n=d_{1n}\bu_1\bbeta^\top+\bm\varepsilon_n$, we obtain
\[
\widetilde{\B}_n^\top \hat{\bu}_1
=
d_{1n}(\bu_1^\top \hat{\bu}_1)\bbeta
+ \bm\varepsilon_n^\top \hat{\bu}_1.
\]
Let \(c_n=\bu_1^\top \hat{\bu}_1\), so that \(|c_n|\overset{p}{\to}1\). For \(j\in\mathcal A^c\), \(\beta_j=0\), and therefore
\[
\left|(\widetilde{\B}_n^\top \hat{\bu}_1)_j\right|
=
\left|(\bm\varepsilon_n^\top \hat{\bu}_1)_j\right|
\le
\|\bm\varepsilon_n^\top \hat{\bu}_1\|_\infty
=
O_p(r_n).
\]
Since \(r_n=o(\xi_n)\), inactive coordinates are asymptotically dominated by the threshold level.

For \(j\in\mathcal A\),
\[
\left|(\widetilde{\B}_n^\top \hat{\bu}_1)_j\right|
\ge
|c_n|d_{1n}|\beta_j|
-
\|\bm\varepsilon_n^\top \hat{\bu}_1\|_\infty.
\]
Because
\[
\xi_n=o\!\left(d_{1n}\min_{j\in\mathcal A}|\beta_j|\right)
\quad\text{and}\quad
\|\bm\varepsilon_n^\top \hat{\bu}_1\|_\infty=o_p(\xi_n),
\]
active coordinates remain asymptotically above the threshold level. Hence $\Pr(\widehat{\mathcal A}=\mathcal A)\to 1$.
\end{proof}

\section*{Additional Simulation Results}
Under the simulation settings described in Section \ref{subsec:subset_recovery}, we consider two additional scenarios designed to assess performance under a symmetric link function and a low signal-to-noise ratio. Specifically, the response is generated as
\begin{enumerate}
\item[IV.] \(y_i = 2.5|1 - \bbeta^\top\x_i| +0.8\epsilon_i\),
\item[V.] \(y_i = \frac{1}{2}\bbeta^\top\x_i + (\bbeta^\top\x_i)\epsilon_i \),
\end{enumerate}
where \(\epsilon_i\sim N(0,1)\). The tuning parameters are selected using the same settings as those in Section \ref{subsec:subset_recovery}.

\begin{table}[htb]
\caption{Mean (standard deviation) of the active set size $\hat s$, true positive rate (TPR), and false positive rate (FPR) for $(n,p) = (50, 10)$ based on 100 random samples. }
\label{tab:sims_additional}
\resizebox{\textwidth}{!}{
\begin{tabular}{cl|ccc|ccc|ccc}
\hline
\multirow{2}{*}{Model} & \multicolumn{1}{c|}{\multirow{2}{*}{Method}} & \multicolumn{3}{c|}{Predictor contamination only} & \multicolumn{3}{c|}{Response contamination only} & \multicolumn{3}{c}{Combined contamination} \\ \cline{3-11} 
& \multicolumn{1}{c|}{}  & $\hat s$ & TPR & FPR  & $\hat s$ & TPR & FPR & $\hat s$  & TPR  & FPR         \\ \hline
\multirow{6}{*}{IV}    & CQR                                          & 10.00 (0.00)    & 1.00 (0.00)    & 1.00 (0.00)    & 10.00 (0.00)    & 1.00 (0.00)    & 1.00 (0.00)   & 10.00 (0.00)  & 1.00 (0.00)  & 1.00 (0.00) \\
& OLS  & 3.86 (2.22)     & 0.41 (0.25)    & 0.36 (0.27)    & 3.72 (2.41)     & 0.40 (0.29)    & 0.35 (0.27)   & 3.72 (2.41)   & 0.40 (0.29)  & 0.35 (0.27) \\
& RANK & 3.22 (2.14)     & 0.36 (0.23)    & 0.29 (0.26)    & 3.25 (2.16)     & 0.37 (0.25)    & 0.28 (0.24)   & 3.25 (2.16)   & 0.37 (0.25)  & 0.28 (0.24) \\
& SIR & {\bf 4.97} (2.11) & 0.51 (0.25)    & 0.49 (0.25)    & {\bf 4.89} (2.03) & 0.49 (0.25)    & 0.48 (0.24)   & {\bf 4.89} (2.03)   & 0.49 (0.25)  & 0.48 (0.24) \\
& SMAVE & 9.61 (0.77)  & 1.00 (0.00) & 0.92 (0.15) & 9.58 (0.66)     & 1.00 (0.00)    & 0.92 (0.13)   & 9.58 (0.66)   & 1.00 (0.00)  & 0.92 (0.13) \\
& smOPG  & {\bf 4.87} (2.39) & 0.56 (0.28) & 0.41 (0.31) & {\bf 4.82} (2.34)     & 0.58 (0.26) & 0.38 (0.31) & {\bf 4.82} (2.34) & 0.58 (0.26)  & 0.38 (0.31) \\ \hline
\multirow{6}{*}{V} & CQR  & 10.00 (0.00)    & 1.00 (0.00)    & 1.00 (0.00)    & 10.00 (0.00)    & 1.00 (0.00)    & 1.00 (0.00)   & 10.00 (0.00)  & 1.00 (0.00)  & 1.00 (0.00) \\
& OLS  & 3.36 (3.14)     & 0.40 (0.38)    & 0.27 (0.31)    & 2.30 (2.80)     & 0.26 (0.33)    & 0.20 (0.29)   & 2.30 (2.80)   & 0.26 (0.33)  & 0.20 (0.29) \\
& RANK  & 3.33 (2.92)     & 0.40 (0.33)    & 0.27 (0.30)    & 3.44 (2.95)     & 0.40 (0.35)    & 0.28 (0.28)   & 3.44 (2.95)   & 0.40 (0.35)  & 0.28 (0.28) \\
& SIR  & 3.64 (1.98)     & 0.38 (0.23)    & 0.35 (0.26)    & 3.96 (1.82)     & 0.40 (0.24)    & 0.39 (0.23)   & 3.96 (1.82)   & 0.40 (0.24)  & 0.39 (0.23) \\
& SMAVE  & 9.60 (0.64)     & 0.97 (0.08)    & 0.95 (0.10)    & 9.67 (0.60)     & 0.97 (0.08)    & 0.96 (0.08)   & 9.67 (0.60)   & 0.97 (0.08)  & 0.96 (0.08) \\
& smOPG & {\bf 4.17} (2.09)  & 0.54 (0.29)    & 0.29 (0.24)    & {\bf 4.43} (2.34)     & 0.57 (0.31)    & 0.32 (0.25)   & {\bf 4.43} (2.34)   & 0.57 (0.31)  & 0.32 (0.25) \\ \hline
\end{tabular}
}
\end{table}

While the performance of all methods deteriorates under both settings, 
Figure~\ref{fig:sims_additional} shows that only the adaptive methods, SMAVE and smOPG, achieve selection performance better than random guessing in Model IV, which features a symmetric link function. In Model V, where the scale of the random fluctuation varies with the index \(\bbeta^\top\X\), only smOPG and RANK outperform random guessing. In both scenarios, smOPG remains conservative, with slightly better precision than the competing methods.

\begin{figure}[htb!]
    \centering
    \includegraphics[width=.9\linewidth]{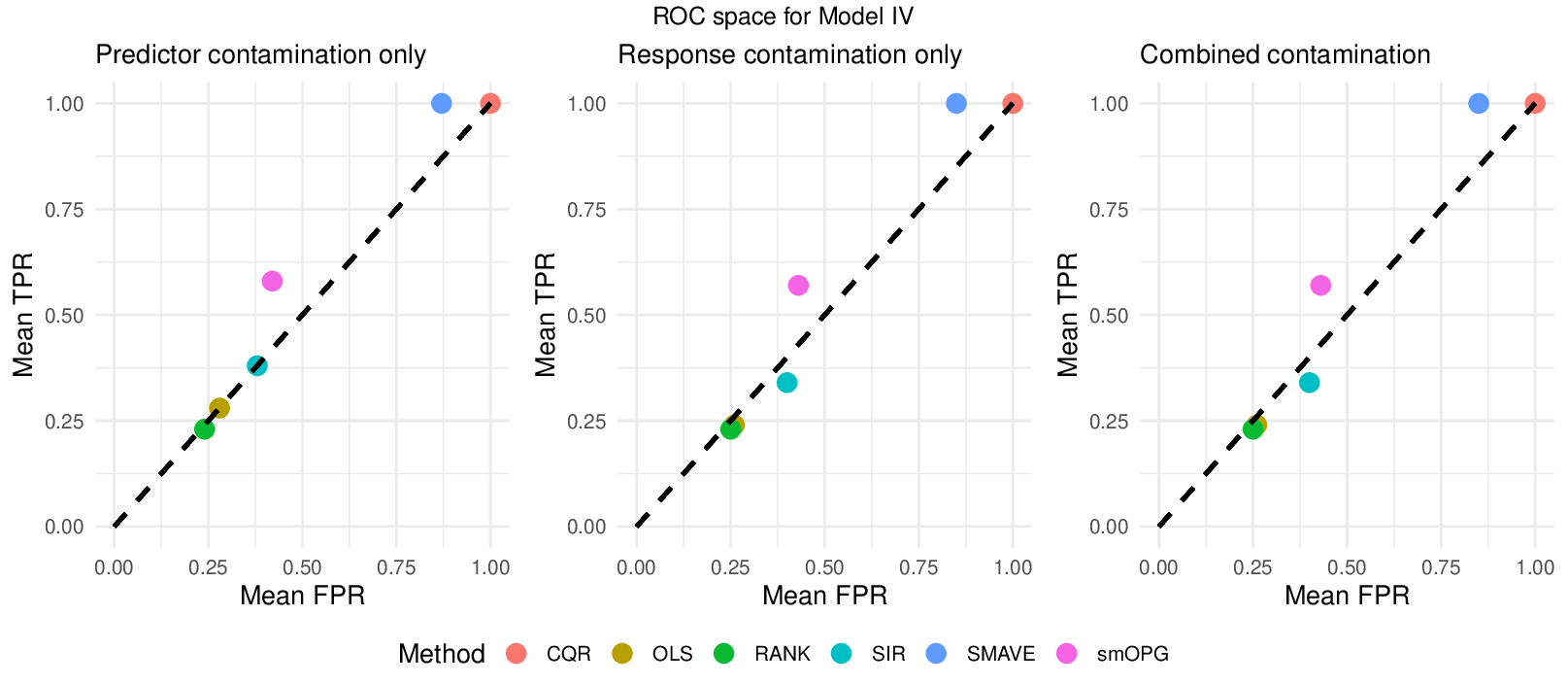}
    \includegraphics[width=.9\linewidth]{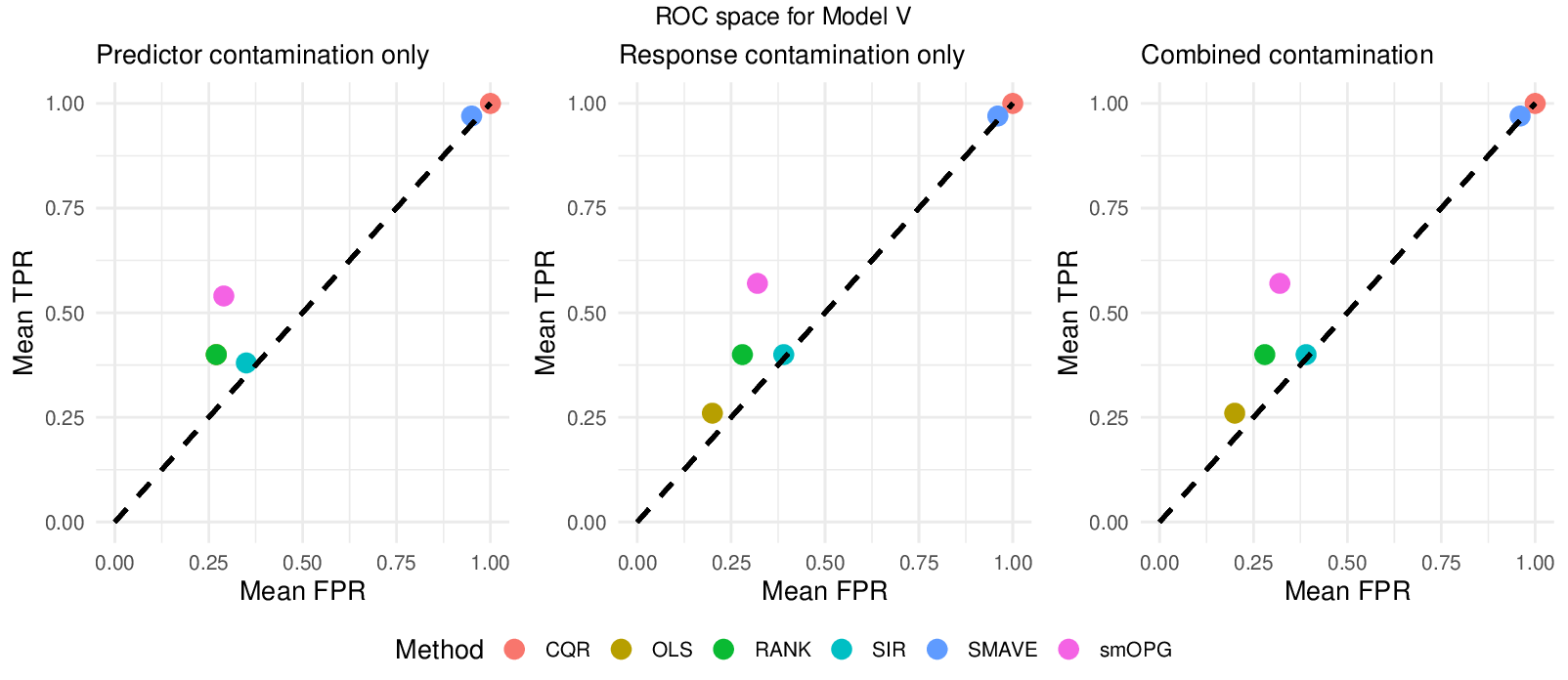}
    \caption{Average TPR vs FPR based on $(n,p)=(50, 10)$ with $5\%$ contamination for Model IV (top) and  Model V (bottom) based on 100 random samples.}
    \label{fig:sims_additional}
\end{figure}
\clearpage


\begin{algorithm}[ht!]
\caption{smOPG-LASSO}
\label{alg:slg}
\begin{algorithmic}[1]

\State Standardize each predictor and compute $\hat{\gamma} =\binom{n}{2}^{-1} \sum_{1\le i<j\le n}\|\x_i-\x_j\|_2$.

\State Estimate the local intercept and slope by solving
    \[
    (\hat a_i,\hat{\bb}_i)
    =
    \arg\min_{a,\bb}
    \left\{
    \sum_{j=1}^n
    \hat w_{ij}\,
    |y_j-a-\bb^\top(\x_j-\x_i)|
    +
    \lambda_n\|\bb\|_1
    \right\}, i,j = 1,\ldots n,
    \]
where $
    \hat w_{ij}  =
    \frac{\exp\{-\|\x_j-\x_i\|_2^2/\hat\gamma^2\}}
    {\sum_{k=1}^n \exp\{-\|\x_k-\x_i\|_2^2/\hat\gamma^2\}}$

\State Define the matrix of local slope estimates as $\widetilde{\B}_n=(\hat{\bb}_1,\ldots,\hat{\bb}_n)^\top$.

\If{\(s\) is prespecified}

    \State Compute the sparse rank-one SVD of \(\widetilde{\B}_n\) with \( \|\bm v_1\|_0 = s\) nonzero right loadings.

\Else

    \State Set \(s\gets 1\) and choose a minimum contribution threshold \(\varrho>0\).

    \State Compute the leading sparse singular value \(\hat d_1(s)\).

    \While{\(s<p\)}

        \State Compute \(\hat d_1(s+1)\).

        \State Set $\delta_s =  \frac{\hat d_1(s+1)-\hat d_1(s)}{\hat d_1(s)}$.
    
        \If{\(\delta_s<\varrho\)}  \textbf{break}
        \EndIf

        \State Update \(s\gets s+1\).

    \EndWhile

\EndIf

\State Let \(\hat{\bEta}_1\) denote the resulting sparse loading vector and set
$\hat{\Gamma} =\hat{\bEta}_1/\|\hat{\bEta}_1\|_2$.

\State Return $\widehat{\mathcal A}=\{j:\hat{\Gamma}_j\neq 0\}$.

\end{algorithmic}
\end{algorithm}

\end{document}